\documentclass[
  aps,
  prd,
  onecolumn,
  superscriptaddress,
  nofootinbib,
  longbibliography,
  floatfix,
  amsmath,
  amssymb,
  10pt
]{revtex4-2}

\usepackage[T1]{fontenc}
\usepackage[utf8]{inputenc}
\usepackage{newtxtext,newtxmath}
\usepackage{microtype}
\usepackage[dvipsnames]{xcolor}
\usepackage{amsmath,amssymb,amsfonts}
\usepackage{mathtools}
\usepackage{bm}
\usepackage{graphicx}
\usepackage{booktabs}
\usepackage{orcidlink}
\usepackage{hyperref}
\usepackage{bookmark}
\usepackage{flafter}
\usepackage{placeins}
\usepackage{capt-of}
\allowdisplaybreaks
\definecolor{darkblue}{rgb}{0,0,0.5}
\hypersetup{
  pdfstartview={FitH},
  pdftitle={A causal magnetic black hole with finite self-energy},
  pdfauthor={Mohsen Fathi, Ariel Guzman, J. R. Villanueva},
  pdfsubject={Nonlinear electrodynamics, magnetic black holes, causal structure, thermodynamics, and optical appearance},
  pdfkeywords={nonlinear electrodynamics, magnetic black holes, finite self-energy, causal structure, photon propagation, black hole imaging},
  colorlinks=true,
  breaklinks=true,
  linkcolor=darkblue,
  citecolor=darkblue,
  urlcolor=darkblue
}

\graphicspath{{./}{figures/}}

\newtheorem{lemma}{Lemma}
\newcommand{\dd}{\mathrm{d}}
\newcommand{\cF}{\mathcal{F}}
\newcommand{\cL}{\mathcal{L}}
\newcommand{\LF}{\mathcal{L}_{\mathcal F}}
\newcommand{\LFF}{\mathcal{L}_{\mathcal F\mathcal F}}
\newcommand{\kappaem}{\kappa_{\rm em}}

\begin{document}

\title{A causal magnetic black hole with finite self-energy}

\author{Mohsen Fathi\orcidlink{0000-0002-1602-0722}}
\email{mohsen.fathi@ucentral.cl}
\affiliation{Centro de Investigaci\'{o}n en Ciencias del Espacio y F\'{i}sica Te\'{o}rica (CICEF), Universidad Central de Chile, La Serena 1710164, Chile}

\author{Ariel Guzm\'an\orcidlink{0009-0008-3844-1203}}
\email{ariel.guzman@estudiantes.uv.cl}
\affiliation{Instituto de F\'{i}sica y Astronom\'{i}a, Facultad de Ciencias, Universidad de Valpara\'{i}so, Avenida Gran Breta\~na 1111, Valpara\'{i}so, Chile}

\author{J. R. Villanueva\orcidlink{0000-0002-6726-492X}}
\email{jose.villanueva@uv.cl}
\affiliation{Instituto de F\'{i}sica y Astronom\'{i}a, Facultad de Ciencias, Universidad de Valpara\'{i}so, Avenida Gran Breta\~na 1111, Valpara\'{i}so, Chile}

\date{\today}

\begin{abstract}
We construct a nonlinear electrodynamics (NLED) model for a static magnetic black hole. Rather than imposing a regular center, we require a Maxwell weak-field limit, finite magnetic self-energy, a positive and subluminal electromagnetic cone, the standard energy conditions, and a static exterior that satisfies known sufficient stability criteria. A positive mixture of power kernels restricts the exponent to $1/4<\gamma\leq1/2$. For the minimal two-kernel model, with $\gamma_1=1/3$ and $\gamma_2=1/2$, the field equations admit an exact solution in terms of incomplete beta functions. For the branches with nonnegative Schwarzschild mass parameter $M_0$, the metric function is strictly increasing, so there is at most one positive-radius horizon and no inner Cauchy horizon. We construct the maximal extension and show that the black-hole, horizonless, and critical branches have spacelike, timelike, and null singularities, respectively. The representative black-hole families have positive temperature and negative fixed-charge heat capacity. Electromagnetic waves split into ordinary and extraordinary optical branches. We calculate their photon spheres, shadow radii, instability rates, and illustrative Event Horizon Telescope size bands. We also construct ISCO-truncated thin-disk images. The full images remain close, but jointly normalized residual maps and radial profiles reveal a coherent branch-dependent shift near the lensed inner edge and the critical region. The extraordinary branch moves the critical curve outward and partly compensates the shadow reduction caused by magnetic charge.

\bigskip

{\noindent{\textit{keywords}}: Nonlinear electrodynamics; Magnetic black holes; Causal structure; Black-hole thermodynamics; Photon propagation; Black-hole imaging}

\end{abstract}

\maketitle

\section{Introduction}\label{sec:intro}

Nonlinear electrodynamics (NLED) replaces the Maxwell Lagrangian by a nonlinear function of the electromagnetic invariants. The subject began with the theories of Born and Born--Infeld and with the quantum-electrodynamic correction of Heisenberg and Euler \cite{Born1933,BornInfeld1934,HeisenbergEuler1936,Plebanski1970,BialynickiBirula1984}. It later developed into a broad framework that includes duality-invariant models, effective field theories, and causal extensions of Maxwell electrodynamics \cite{GibbonsRasheed1995,GibbonsHerdeiro2001,BandosEtAl2020,Sorokin2022,RussoTownsend2024}. In curved spacetime, NLED affects both the geometry, through its stress tensor, and the propagation of electromagnetic disturbances \cite{Boillat1970,NovelloEtAl2000,ObukhovRubilar2002,ShabadUsov2011,Schellstede2016}.

Many black hole solutions have been found in Einstein gravity coupled to NLED. The Born--Infeld family provided early examples with finite electric fields, nontrivial horizon structure, and interesting thermodynamics \cite{Breton2003,FernandoKrug2003,CaiPangWang2004,Dey2004,Fernando2006}. Regular black holes later became an important direction. The solutions of Ay\'on-Beato and Garc\'ia, together with the magnetic interpretation of the Bardeen geometry, showed how NLED can support nonsingular metrics \cite{AyonBeatoGarcia1998,AyonBeatoGarcia1999,AyonBeatoGarcia2000,AyonBeatoGarcia2005,CataldoGarcia2000}. Related constructions include electrically and magnetically charged families, de Sitter cores, reverse-engineered models, and new power-law or Born--Infeld-type theories \cite{Dymnikova2004,Bronnikov2001,Ma2015,FanWang2016,Kruglov2017,BronnikovComment2017,RodriguesSilvaSiqueira2020,Bronnikov2024,AlencarEtAl2024,LiLu2024,ContrerasEtAl2025,BokulicEtAl2024,BokulicEtAl2026,TsudaEtAl2026}. Recent work has also clarified exact rotating NLED sources and alternative first-order formulations \cite{GarciaDiaz2022,AyonBeato2024,AyonBeatoFloresHassaine2024,VerbinEtAl2025,BokulicHerdeiro2025,BarrientosEtAl2025}.

Regularity of the metric, however, is not enough by itself. A one-invariant theory with the Maxwell weak-field limit faces a strong restriction in the electric sector, while regular magnetic centers can develop problems in their characteristic or perturbation sectors \cite{Bronnikov2001,Bronnikov2022,BokulicEtAl2024,BokulicEtAl2026,DeFeliceTsujikawaScalar2025}. A direct perturbative analysis found a Laplacian instability near the center of nonsingular Einstein--NLED black holes \cite{DeFeliceTsujikawa2025}. More general studies have shown that causality imposes strong energy-condition constraints and can exclude regular black holes in important NLED classes \cite{RussoTownsend2024,RussoTownsend2026,AbeEtAl2026}. Stable singular solutions may still exist, which makes it important to distinguish a regular center from a physically controlled exterior \cite{ChenEtAl2026}.

We therefore take a different route. We do not force the center to be regular. Instead, we require a Maxwell limit, finite magnetic self-energy, a positive and subluminal magnetic characteristic, the usual energy conditions, and a static exterior that satisfies known sufficient stability conditions. Finite electromagnetic self-energy has recently been connected with the removal of inner Cauchy horizons \cite{HaleEtAl2026}. Other no-Cauchy-horizon results and exact NLED constructions point in the same direction \cite{AnLiYang2021,LeeMyung2026,PinedoFrolov2026}. Our aim is to obtain a simple theory with a clear exterior sector and a global structure that can be studied analytically.

The thermodynamics and perturbations of NLED black holes have been studied in many settings. The first law, generalized Smarr relations, phase transitions, stability conditions, quasinormal modes, and greybody factors can all depend on the nonlinear couplings \cite{Smarr1973,MorenoSarbach2003,ZhangGao2018,BokulicThermo2021,WangWuYang2019,NomuraYoshidaSoda2020,NomuraYoshida2022,DaghighEtAl2022,BalartFernando2021,CaiMiao2021,HegdeEtAl2025,CroneyEtAl2025,LiangEtAl2026}. In earlier work, we studied quasinormal modes of a different static black hole sourced by a Pleba\'nski-type NLED model in an asymptotically anti-de Sitter spacetime \cite{FathiGuzmanVillanueva2026}. Here we instead introduce an asymptotically flat power-mixture theory and derive its exact magnetic solution, causal structure, thermodynamics, and optical sector.

Light propagation provides another strong test of NLED. Effective photon metrics, vacuum birefringence, light rings, lensing, and polarization-dependent shadow edges have been widely studied \cite{NovelloEtAl2000,PerlickEtAl2015,DePaulaEtAl2023,DePaulaEtAl2026,Escobar2026}. Recent works have examined NLED shadows, thin disks, deflection angles, quasinormal modes, and Event Horizon Telescope (EHT) constraints \cite{OkyayOvgun2022,UniyalPantigOvgun2023,UniyalEtAl2023,UniyalEtAl2024,LambiaseEtAl2025,CimdikerEtAl2026,UktamovEtAl2026}. Related studies have considered rotating shadows, effective-metric images, accretion observables, and strong-field constraints \cite{RazaEtAl2024,DePaulaEtAl2024,SarkarEtAl2025,SaleemEtAl2026}. These results show why the spacetime metric and the electromagnetic characteristic metric must be treated separately.

Our model is built from a positive mixture of simple power kernels. Positivity lets us prove the main inequalities term by term. Finite self-energy fixes the lower end of the exponent interval, while causal magnetic propagation fixes the upper end. We then choose the minimal two-kernel representative and obtain an exact black hole solution in terms of incomplete beta functions. To the best of our knowledge, the combined construction is new: it joins the positive power-mixture rule, the physically selected exponent interval, the exact two-kernel geometry, and a strict one-horizon result for $M_0\geq0$. The individual fractional-power kernels have related forms in the literature; the new elements are their constructive mixture, the resulting solution, and its global theorem.

The paper is arranged as follows. Section~\ref{sec:setup} gives the field equations and the characteristic and stability conditions. Sections~\ref{sec:model} and \ref{sec:solution} introduce the model and derive the exact solution. Sections~\ref{sec:properties}--\ref{sec:causal} study its physical conditions, horizons, and causal structure. Section~\ref{sec:thermo} discusses thermodynamics. We then examine photon motion, illustrative shadow bounds, sample trajectories, and ISCO-truncated disk images. Section~\ref{sec:discussion} summarizes the main results.
\section{Field equations and known conditions}\label{sec:setup}

We use $G=c=1$, metric signature $(-,+,+,+)$, and
\begin{equation}
 \cF=\frac14 F_{\mu\nu}F^{\mu\nu}.
 \label{eq:Fdef}
\end{equation}
The action is
\begin{equation}
 S=\frac{1}{16\pi}\int \dd^4x\sqrt{-g}\left[R-4\cL(\cF)\right].
 \label{eq:action}
\end{equation}
This normalization is common in the Einstein--NLED literature \cite{Plebanski1970,Bronnikov2001,FanWang2016,Sorokin2022,VerbinEtAl2025}. Varying Eq.~\eqref{eq:action} gives
\begin{align}
 G_{\mu\nu}&=2\left(\LF F_{\mu\lambda}F_{\nu}{}^{\lambda}-g_{\mu\nu}\cL\right),
 \label{eq:einstein}\\
 \nabla_{\mu}\left(\LF F^{\mu\nu}\right)&=0,
 \qquad \nabla_{[\mu}F_{\nu\rho]}=0,
 \label{eq:nled}
\end{align}
and
\begin{equation}
 T_{\mu\nu}=\frac{1}{4\pi}\left(\LF F_{\mu\lambda}F_{\nu}{}^{\lambda}-g_{\mu\nu}\cL\right).
 \label{eq:stress}
\end{equation}
Standard derivations can be found in Refs.~\cite{Plebanski1970,Sorokin2022}.

For a static magnetic monopole we use
\begin{align}
 \dd s^2&=-f(r)\dd t^2+\frac{\dd r^2}{f(r)}+r^2\dd\Omega^2,
 \qquad f(r)=1-\frac{2m(r)}{r},
 \label{eq:metric}\\
 F&=Q_m\sin\theta\,\dd\theta\wedge\dd\phi,
 \qquad \cF=\frac{Q_m^2}{2r^4}.
 \label{eq:magnetic}
\end{align}
The Bianchi identity fixes the magnetic charge. With this field, the NLED equation is satisfied, and the $t$-$t$ Einstein equation reduces to
\begin{equation}
 m'(r)=r^2\cL\!\left(\frac{Q_m^2}{2r^4}\right).
 \label{eq:mprime}
\end{equation}
This is the standard mass equation for static magnetic Einstein--NLED solutions \cite{Bronnikov2001,Ma2015,FanWang2016,Kruglov2017,Bronnikov2022,LeeMyung2026}.

Up to a conformal factor, the effective metric of the extraordinary electromagnetic mode can be written as \cite{Boillat1970,NovelloEtAl2000,ObukhovRubilar2002,Schellstede2016}
\begin{equation}
 g_{\rm eff}^{\mu\nu}=\LF g^{\mu\nu}-\LFF F^{\mu}{}_{\lambda}F^{\lambda\nu}.
 \label{eq:geff}
\end{equation}
For the magnetic field, the important factor is
\begin{equation}
 \Phi=\LF+2\cF\LFF,
 \qquad
 \kappaem=\frac{\Phi}{\LF}.
 \label{eq:kappa}
\end{equation}
In our convention, $\LF>0$ and $\Phi>0$ keep the magnetic characteristic well defined. The condition $0<\kappaem\leq1$ places the extraordinary cone inside, or exactly on, the spacetime cone \cite{ShabadUsov2011,GibbonsHerdeiro2001,Schellstede2016}.

Moreno and Sarbach obtained sufficient conditions for linear stability outside a magnetic black hole \cite{MorenoSarbach2003}. Nomura, Yoshida and Soda later extended this analysis to a general theory with two invariants \cite{NomuraYoshidaSoda2020}. For the one-invariant model used here, we use the following sufficient conditions:
\begin{equation}
 \cL>0,
 \qquad \LF>0,
 \qquad 0<f(r)\kappaem(r)<3
 \quad (r>r_+).
 \label{eq:stability}
\end{equation}
The value 3 comes from the lowest coupled multipole, $\ell=2$ \cite{NomuraYoshidaSoda2020}.

\section{The power-mixture NLED model}\label{sec:model}

We start from the kernel
\begin{equation}
 \cL_{\beta,\gamma}(\cF)=
 \frac{(1+\beta\cF)^{1-\gamma}-1}{\beta(1-\gamma)},
 \qquad \beta>0.
 \label{eq:kernel}
\end{equation}
We then take a positive normalized mixture,
\begin{equation}
 \cL_{\mu}(\cF)=\int
 \cL_{\beta,\gamma}(\cF)\,\dd\mu(\beta,\gamma),
 \qquad \int\dd\mu=1.
 \label{eq:mixture}
\end{equation}
where the measure has support in
\begin{equation}
 \boxed{\frac14<\gamma\leq\frac12.}
 \label{eq:range}
\end{equation}
The positive measure makes $\cL_{\mu}$ a Bernstein function of $\cF$ \cite{SchillingBernstein2012} and preserves the physical inequalities derived below.

Direct differentiation gives
\begin{align}
 \LF&=\int(1+\beta\cF)^{-\gamma}\,\dd\mu>0,
 \label{eq:LFmix}\\
 \LFF&=-\int\gamma\beta(1+\beta\cF)^{-\gamma-1}\,\dd\mu<0,
 \label{eq:LFFmix}\\
 \Phi&=\int
 \frac{1+(1-2\gamma)\beta\cF}
 {(1+\beta\cF)^{\gamma+1}}\,\dd\mu>0.
 \label{eq:Phimix}
\end{align}
Since $\LFF<0$, we also have $0<\Phi\leq\LF$ and therefore
\begin{equation}
 0<\kappaem\leq1.
 \label{eq:kappabound}
\end{equation}
These inequalities hold at every magnetic field strength.

The Maxwell limit follows from the normalization of the measure:
\begin{equation}
 \cL_{\mu}(\cF)=\cF-\frac{\cF^2}{2}
 \int\gamma\beta\,\dd\mu+O(\cF^3).
 \label{eq:weakgeneral}
\end{equation}
None of the theory parameters depends on the mass or charge of a particular solution. This avoids a known problem of some reverse-engineered regular models \cite{BokulicEtAl2024,BokulicEtAl2026}.

The lower limit in Eq.~\eqref{eq:range} comes from the self-energy. Near the center, $\cF\sim r^{-4}$, and one kernel behaves as $\cL\sim\cF^{1-\gamma}$. Therefore,
\begin{equation}
 \int_0 r^2\cL\,\dd r
 \sim\int_0 r^{-2+4\gamma}\,\dd r.
 \label{eq:selfcriterion}
\end{equation}
This integral is finite only when $\gamma>1/4$. The upper limit follows from Eq.~\eqref{eq:Phimix}: at large field, the numerator remains positive when $\gamma\leq1/2$.

The simplest nontrivial choice contains two kernels. We take one exponent inside the allowed interval and the other at its upper boundary:
\begin{equation}
 (\gamma_1,\gamma_2)=\left(\frac13,\frac12\right).
 \label{eq:gammas}
\end{equation}
The value $1/3$ is not unique; it is a simple rational representative inside the allowed interval. The value $1/2$ saturates the causal upper bound. This pair gives a minimal two-kernel model with one interior and one boundary contribution.
With $\beta>0$, $\sigma>0$, and $0\leq\xi\leq1$, the Lagrangian is
\begin{align}
 \cL(\cF)={}&
 \frac{3(1-\xi)}{2\beta}
 \left[(1+\beta\cF)^{2/3}-1\right]
 +\frac{2\xi}{\sigma\beta}
 \left[(1+\sigma\beta\cF)^{1/2}-1\right].
 \label{eq:modelL}
\end{align}
Its derivatives are
\begin{align}
 \LF={}&(1-\xi)(1+\beta\cF)^{-1/3}
 +\xi(1+\sigma\beta\cF)^{-1/2},
 \label{eq:modelLF}\\
 \LFF={}&-\frac{(1-\xi)\beta}{3(1+\beta\cF)^{4/3}}
 -\frac{\xi\sigma\beta}{2(1+\sigma\beta\cF)^{3/2}},
 \label{eq:modelLFF}\\
 \Phi={}&(1-\xi)\frac{1+\beta\cF/3}{(1+\beta\cF)^{4/3}}
 +\xi\frac{1}{(1+\sigma\beta\cF)^{3/2}}.
 \label{eq:modelPhi}
\end{align}
Figure~\ref{fig:constitutive} shows these functions.

At weak field,
\begin{align}
 \cL={}&\cF-\beta\left[\frac{1-\xi}{6}+\frac{\xi\sigma}{4}\right]\cF^2
 +\beta^2\left[\frac{2(1-\xi)}{27}+\frac{\xi\sigma^2}{8}\right]\cF^3
 +O(\cF^4).
 \label{eq:weakmodel}
\end{align}
The quadratic coefficient is negative because the model is concave in $\cF$. For this reason, it should not be identified with the one-loop Heisenberg--Euler action.
\begin{figure}[t]
\centering
\includegraphics[width=0.96\textwidth]{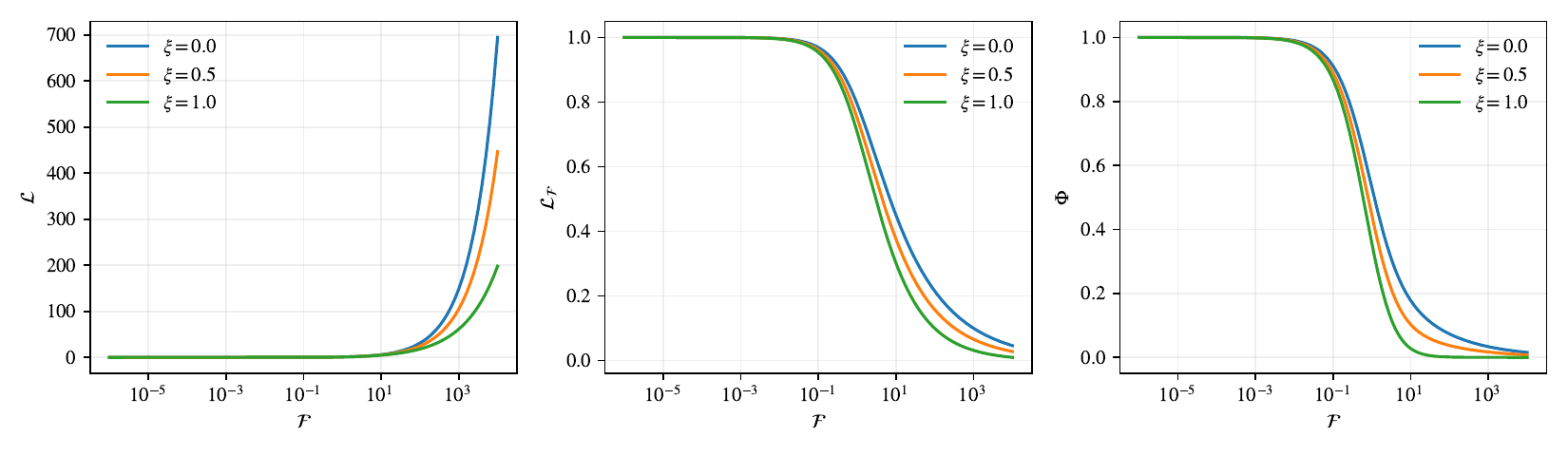}
\caption{Constitutive functions of the two-kernel model for $\sigma=1$. Every curve satisfies $\LF>0$, $\LFF<0$, and $\Phi>0$. Thus the extraordinary magnetic cone remains well defined for all field strengths.}
\label{fig:constitutive}
\end{figure}

\section{Exact static solution and finite self-energy}\label{sec:solution}

For one kernel, define
\begin{equation}
 b^4=\frac{\beta Q_m^2}{2},
 \qquad
 z(r)=\frac{r^4}{r^4+b^4}.
 \label{eq:bz}
\end{equation}
The radial integral in Eq.~\eqref{eq:mprime} is
\begin{align}
 \mathcal I_{\gamma}(r;b)
 ={}&\int_0^r x^2
 \left[\left(1+\frac{b^4}{x^4}\right)^{1-\gamma}-1\right]\dd x
 \nonumber\\
 ={}&\frac{b^3}{4}
 B_{z(r)}\!\left(\gamma-\frac14,-\frac34\right)
 -\frac{r^3}{3}.
 \label{eq:I}
\end{align}
Here $B_z(a,b)$ denotes the lower incomplete beta function. The result follows by using $z=x^4/(x^4+b^4)$ together with standard beta-function identities \cite{DLMF,GradshteynRyzhik2015}.

For the two-kernel model, set
\begin{equation}
 b_1^4=\frac{\beta Q_m^2}{2},
 \qquad
 b_2^4=\frac{\sigma\beta Q_m^2}{2}.
 \label{eq:b12}
\end{equation}
The exact mass function is
\begin{align}
 m(r)=M_0
 &+\frac{3(1-\xi)}{2\beta}\mathcal I_{1/3}(r;b_1)
 \nonumber\\
 &+\frac{2\xi}{\sigma\beta}\mathcal I_{1/2}(r;b_2),
 \label{eq:mass}
\end{align}
and
\begin{equation}
 \boxed{f(r)=1-\frac{2m(r)}{r}.}
 \label{eq:fexact}
\end{equation}
The constant $M_0$ is an independent Schwarzschild mass parameter. For the one-horizon analysis below, we restrict to $M_0\geq0$, including the electromagnetic-mass branch $M_0=0$. Differentiating Eq.~\eqref{eq:mass} directly recovers Eq.~\eqref{eq:mprime}.

The total electromagnetic self-energy is finite. For a single kernel, we find
\begin{equation}
 \int_0^\infty r^2
 \left[\left(1+\frac{b^4}{r^4}\right)^{1-\gamma}-1\right]\dd r
 =\frac{b^3(1-\gamma)}{3}
 B\!\left(\frac14,\gamma-\frac14\right).
 \label{eq:kernelenergy}
\end{equation}
We prove this result in Appendix~\ref{app:energy}. The two-kernel self-energy is
\begin{align}
 E_{\rm em}={}&\frac{(Q_m^2/2)^{3/4}}{3\beta^{1/4}}
 \Bigg[
 (1-\xi)B\!\left(\frac14,\frac1{12}\right)
 +\xi\sigma^{-1/4}B\!\left(\frac14,\frac14\right)
 \Bigg],
 \label{eq:Eem}
\end{align}
and the total mass is
\begin{equation}
 M=M_0+E_{\rm em}.
 \label{eq:ADM}
\end{equation}

At large radius, Eq.~\eqref{eq:weakmodel} leads to
\begin{align}
 f(r)={}&1-\frac{2M}{r}+\frac{Q_m^2}{r^2}
 -\frac{\beta Q_m^4}{10r^6}
 \left[\frac{1-\xi}{6}+\frac{\xi\sigma}{4}\right]
 +O(r^{-10}).
 \label{eq:asymptotic}
\end{align}
Thus, far from the center, the solution approaches the Reissner--Nordstr\"om form.

The center is singular. When $0\leq\xi<1$, the $\gamma=1/3$ term is dominant and
\begin{equation}
 m(r)=M_0+C_{1/3}r^{1/3}+o(r^{1/3}),
 \label{eq:centerM}
\end{equation}
where $C_{1/3}>0$. On the electromagnetic-mass branch, with $M_0=0$, the metric diverges as $r^{-2/3}$. This divergence is weaker than the Schwarzschild term $r^{-1}$, but the curvature is still singular. For $\xi=1$, one has $m(r)\sim C_{1/2}r$. The metric then approaches a global-monopole-type constant, while the curvature remains singular \cite{BarriolaVilenkin1989,RussoTownsend2026}.

\section{Energy conditions, causality, and exterior stability}\label{sec:properties}

For the magnetic stress tensor,
\begin{equation}
 \rho=\frac{\cL}{4\pi},
 \qquad p_r=-\rho,
 \qquad p_t=\frac{2\cF\LF-\cL}{4\pi}.
 \label{eq:pressures}
\end{equation}
These relations follow directly from Eq.~\eqref{eq:stress} and are standard in magnetic NLED \cite{Bronnikov2001,Bronnikov2022}.

The conditions $\cL\geq0$ and $\LF>0$ imply the weak energy condition. For the strong energy condition, the relevant combination satisfies
\begin{equation}
 \frac{\dd}{\dd\cF}\left(2\cF\LF-\cL\right)=\Phi>0,
 \label{eq:secproof}
\end{equation}
This combination vanishes at $\cF=0$. Since its derivative is positive, we have $2\cF\LF-\cL\geq0$.

For one kernel, with $y=1+\beta\cF$,
\begin{equation}
 \cL-\cF\LF=
 \frac{\gamma y^{1-\gamma}+(1-\gamma)y^{-\gamma}-1}
 {\beta(1-\gamma)}\geq0.
 \label{eq:decproof}
\end{equation}
The last inequality follows from the weighted arithmetic--geometric mean inequality. A positive mixture preserves it. Therefore, both the dominant and strong energy conditions hold. This is consistent with the general link between causality and energy conditions in NLED \cite{RussoTownsend2024}.

Equations~\eqref{eq:LFmix}--\eqref{eq:kappabound} also give
\begin{equation}
 \cL>0,
 \qquad \LF>0,
 \qquad 0<\kappaem\leq1.
 \label{eq:mainbounds}
\end{equation}
In the exterior of a positive-mass black hole, $0<f<1$. Therefore,
\begin{equation}
 0<f(r)\kappaem(r)<1<3,
 \qquad r>r_+.
 \label{eq:stabilityproof}
\end{equation}
Thus the sufficient stability conditions in Eq.~\eqref{eq:stability} hold everywhere outside the horizon. This result applies to the static magnetic exterior. It does not replace a direct calculation of the coupled quasinormal modes \cite{MorenoSarbach2003,NomuraYoshidaSoda2020,NomuraYoshida2022,DaghighEtAl2022,FathiGuzmanVillanueva2026,LiangEtAl2026}.

\section{One-horizon result}\label{sec:horizon}

\begin{lemma}\label{lem:onehorizon}
For the nontrivial magnetic solutions with $Q_m\neq0$ and $M_0\geq0$, the metric function satisfies $f'(r)>0$ for every $r>0$. Hence $f(r)=0$ has at most one positive root.
\end{lemma}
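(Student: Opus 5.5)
We need $f'(r)=\frac{2m(r)}{r^2}-\frac{2m'(r)}{r}=\frac{2}{r^2}\bigl(m(r)-r\,m'(r)\bigr)$. Using Eq.~\eqref{eq:mprime} and the integral representation Eq.~\eqref{eq:mass}, which writes $m(r)=M_0+\int_0^r x^2\cL(Q_m^2/2x^4)\,\dd x$, I will show that
\begin{equation*}
 r^2 f'(r)=2M_0+2\left[\int_0^r x^2\cL\!\left(\frac{Q_m^2}{2x^4}\right)\dd x-r^3\cL\!\left(\frac{Q_m^2}{2r^4}\right)\right].
\end{equation*}
Since $M_0\geq0$, it is enough to prove that the bracket, which I call $D(r)$, is strictly positive for every $r>0$.

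I will study $D(r)$ through its derivative together with its limit at the origin. First, $D(r)\to0$ as $r\to0^+$. For each kernel in the allowed range, $x^2\cL\sim x^{-2+4\gamma}$ near the origin, so the integral vanishes like $r^{4\gamma-1}$. The term $r^3\cL\sim r^{3-4+4\gamma}=r^{4\gamma-1}$ also vanishes, because $\gamma>1/4$. This is where the finite self-energy condition Eq.~\eqref{eq:range} is used. Next, with $\cF=Q_m^2/2r^4$ and $\dd\cF/\dd r=-4\cF/r$, the derivative is
\begin{equation*}
 D'(r)=r^2\cL-3r^2\cL+4r^2\cF\LF=2r^2\left(2\cF\LF-\cL\right).
\end{equation*}
By Eq.~\eqref{eq:secproof}, this combination vanishes at $\cF=0$ and has derivative $\Phi>0$. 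For $Q_m\neq0$ and finite $r$ we have $\cF>0$, so $2\cF\LF-\cL>0$ strictly. Hence $D'(r)>0$ on $(0,\infty)$. Combined with $D(0^+)=0$, this gives $D(r)>0$ for all $r>0$, and therefore $f'(r)>0$.

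The step that needs the most care is the boundary limit $D(0^+)=0$. It requires the small-$r$ behaviour of both terms, and the mixture including the $\xi=1$ endpoint has to be handled. For the explicit two-kernel model, $\cL\le C(1+\cF^{2/3}+\cF^{1/2})$ for large $\cF$, so $r^3\cL=O(r^{1/3})+O(r)\to0$, and the integral term vanishes by dominated convergence. This is consistent with Eq.~\eqref{eq:centerM}.

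Finally, since $f$ is strictly increasing and continuous on $(0,\infty)$, it can cross zero at most once. This gives at most one positive root and rules out an inner Cauchy horizon.
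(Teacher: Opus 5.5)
Your proposal is correct and follows the paper's proof. Both write $f'=2D/r^2$, compute $D'=2r^2(2\cF\LF-\cL)>0$ from Eq.~\eqref{eq:secproof}, and conclude from the boundary value of $D$ at $r\to0^+$. The only difference is cosmetic: you pull $M_0$ out of $D$ and check explicitly that the electromagnetic part vanishes at the origin for every $M_0\geq0$, whereas the paper splits into $M_0>0$ and $M_0=0$ and uses that same limit implicitly when it writes $D(0^+)=M_0$.
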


\noindent\textit{Proof.}
Define
\begin{equation}
 D(r)=m(r)-r m'(r).
 \label{eq:D}
\end{equation}
Using Eq.~\eqref{eq:mprime} and $r\cF'=-4\cF$, we find
\begin{align}
 f'(r)&=\frac{2D(r)}{r^2},
 \label{eq:fprime}\\
 D'(r)&=2r^2\left(2\cF\LF-\cL\right)\geq0.
 \label{eq:Dprime}
\end{align}
The second relation follows from Eq.~\eqref{eq:secproof}. For a nonzero magnetic field, $2\cF\LF-\cL>0$ at every finite $r>0$, so $D'(r)>0$. If $M_0>0$, then $D(0^+)=M_0>0$. If $M_0=0$, finite self-energy gives $D(0^+)=0$, and the strict increase of $D$ gives $D(r)>0$ for every $r>0$. Therefore,
\begin{equation}
 \boxed{f'(r)>0\qquad (r>0)}.
 \label{eq:monotonic}
\end{equation}
Thus $f(r)$ is strictly increasing and can have at most one positive root.\hfill$\square$

\medskip
We now determine when this root exists. For $M_0>0$, $f(r)\to-\infty$ as $r\to0^+$ and $f(r)\to1$ as $r\to\infty$, so there is exactly one horizon. The same conclusion holds for $M_0=0$ when $0\leq\xi<1$. For the pure boundary model, with $\xi=1$, define
\begin{equation}
 \lambda=\frac{|Q_m|}{\sqrt{\beta}}.
 \label{eq:lambda}
\end{equation}
Then
\begin{equation}
 f(0^+)=1-\frac{2\sqrt{2}\lambda}{\sqrt{\sigma}}.
 \label{eq:f0}
\end{equation}
The electromagnetic-mass branch is horizonless for
\begin{equation}
 \lambda\leq\frac{\sqrt{\sigma}}{2\sqrt{2}},
 \label{eq:threshold}
\end{equation}
and has one horizon above this value. At equality, the limiting zero is at the singular center, so it is not a regular extremal horizon. Therefore, within the $M_0\geq0$ sector, the solution has no inner Cauchy horizon and no extremal horizon at finite radius.

This result is consistent with the finite-self-energy argument of Ref.~\cite{HaleEtAl2026} and with other no-Cauchy-horizon results in nonlinear matter systems \cite{AnLiYang2021}. Here the stronger statement follows directly from the monotonicity of $f(r)$ in the sector considered above. Figure~\ref{fig:horizons} shows representative examples and the positive function $D$.

\begin{figure}[htp]
\centering
\includegraphics[width=0.96\textwidth]{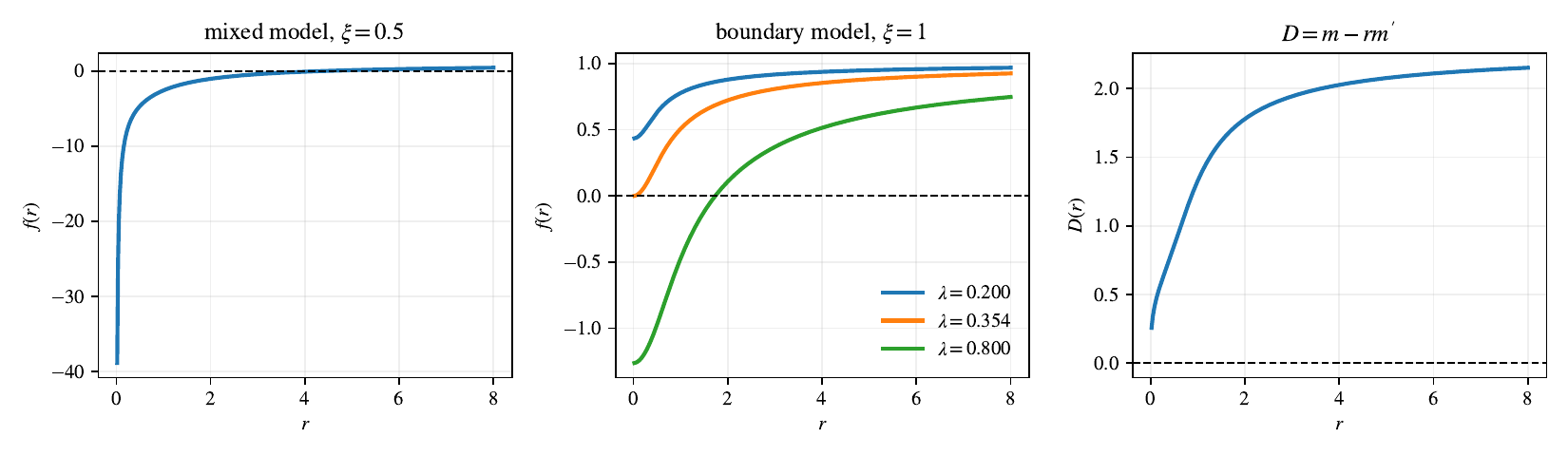}
\caption{Examples of the static solution in dimensionless variables. The first two panels show the metric function for the mixed and boundary models. The last panel shows $D=m-rm'$. Since $D>0$, we have $f'>0$, which proves the one-horizon result.}
\label{fig:horizons}
\end{figure}

\section{Maximal extension and causal structure}\label{sec:causal}

The one-horizon result makes the global analysis simple. We follow the usual Schwarzschild construction, while keeping the exact NLED function $f(r)$ at every step \cite{Eddington1924,Finkelstein1958,Kruskal1960,Szekeres1960,Penrose1965,HawkingEllis1973,Wald1984,FrolovNovikov1998}. The tortoise coordinate cannot be written in elementary form, but an exact coordinate construction is still possible.

\subsection{Null and Kruskal--Szekeres coordinates}

For the radial metric,
\begin{equation}
 \dd s_{(2)}^2=-f(r)\dd t^2+\frac{\dd r^2}{f(r)},
\end{equation}
we define
\begin{equation}
 r_*(r)=\int^r\frac{\dd s}{f(s)},
 \qquad u=t-r_*,\qquad v=t+r_*.
\end{equation}
Let $r_+$ be the unique horizon and
\begin{equation}
 \kappa=\frac12f'(r_+)>0.
\end{equation}
Near the horizon,
\begin{equation}
 f(r)=2\kappa(r-r_+)+O\!\bigl((r-r_+)^2\bigr).
\end{equation}
We separate the logarithmic term by writing
\begin{align}
 2\kappa r_*&=\ln|r-r_+|+H(r),\\
 H(r)&=\int_{r_+}^{r}\left[\frac{2\kappa}{f(s)}-\frac{1}{s-r_+}\right]\dd s.
\end{align}
The function $H(r)$ is regular at the horizon. We now introduce
\begin{equation}
 U=-e^{-\kappa u},\qquad V=e^{\kappa v}
\end{equation}
in the right exterior. The signs are then continued in the standard way to the other regions. If
\begin{equation}
 G(r)=(r-r_+)e^{H(r)},
\end{equation}
then
\begin{equation}
 -UV=G(r),
 \qquad
 \dd s_{(2)}^2=-\frac{f(r)}{\kappa^2G(r)}\dd U\dd V.
\end{equation}
Since
\begin{equation}
 \lim_{r\to r_+}\frac{f(r)}{G(r)}=2\kappa,
\end{equation}
the metric is regular at the horizon.

With
\begin{equation}
 T=\frac{V+U}{2},\qquad X=\frac{V-U}{2},
\end{equation}
constant-radius curves satisfy
\begin{equation}
 X^2-T^2=G(r).
\end{equation}
For one-horizon branches, $G(0)$ is finite and negative. We use the free normalization of $U$ and $V$ to set $G(0)=-1$. The singular boundaries are then
\begin{equation}
 T=\pm\sqrt{1+X^2}.
\end{equation}
These boundaries are spacelike. Every coordinate curve in our figures stops there, and no curve is continued through the curvature singularity.

\subsection{Three parameter choices}

To show how the parameters change the extension, we use three examples from the one-horizon branch. In the text they are denoted by $A_{\rm c}$--$C_{\rm c}$, while the shorter labels A--C are used inside the diagrams:
\begin{align}
 A_{\rm c}:&\quad (\xi,\sigma,Q_m,\beta)=(0.25,1,1,1),\\
 B_{\rm c}:&\quad (\xi,\sigma,Q_m,\beta)=(0.50,2,1,1),\\
 C_{\rm c}:&\quad (\xi,\sigma,Q_m,\beta)=(1,1,0.8,1),
\end{align}
with $M_0=0$. These examples are used only for the causal analysis and are different from the optical benchmark models introduced later in Table~\ref{tab:models}. Their horizon data are
\begin{center}
\begin{tabular}{c ccc}
\toprule
Model & $r_+$ & $\kappa$ & $\kappa r_+$\\
\midrule
$A_{\rm c}$ & 5.1684 & 0.0931 & 0.481\\
$B_{\rm c}$ & 4.0753 & 0.1153 & 0.470\\
$C_{\rm c}$ & 1.7354 & 0.2274 & 0.395\\
\bottomrule
\end{tabular}
\end{center}
The parameters change the horizon radius, the surface gravity, and the positions of the constant-radius curves. However, the global topology remains the same as long as the solution stays on the one-horizon branch.

Figure~\ref{fig:ks_three} shows the Kruskal--Szekeres extension. 
\begin{figure}[t]
\centering
\makebox[\textwidth][c]{\includegraphics[width=1.08\textwidth]{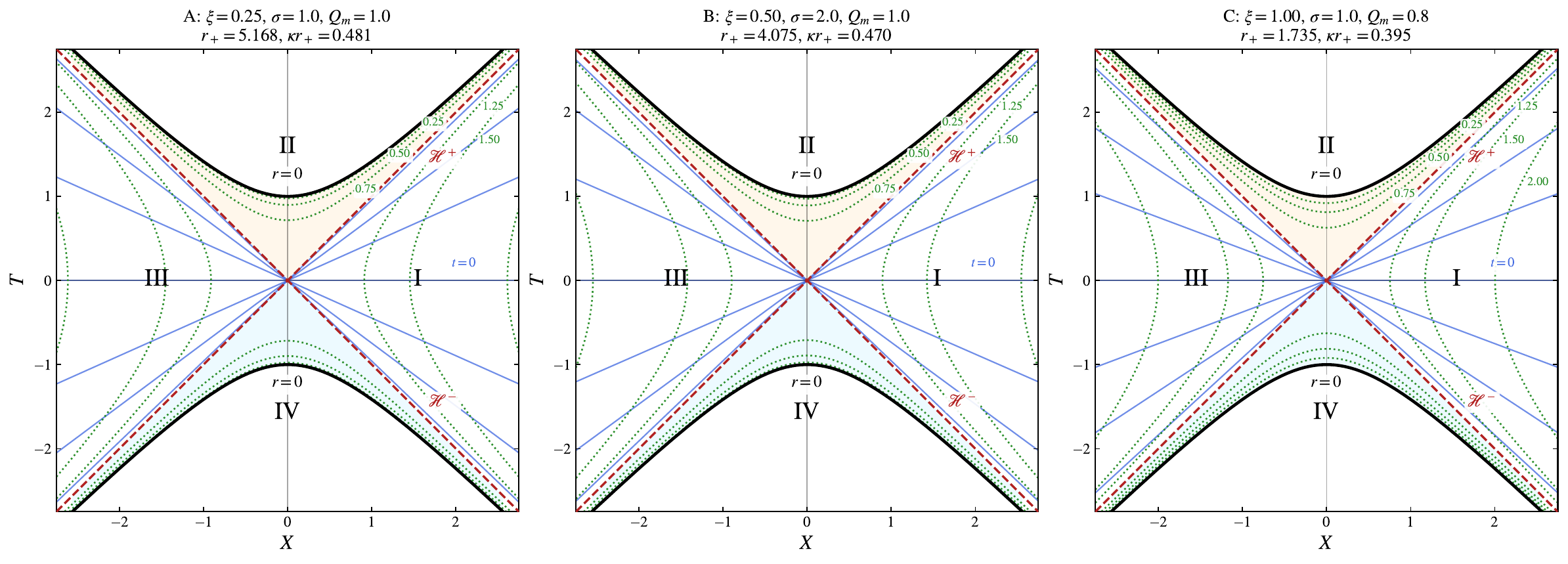}}
\caption{Kruskal--Szekeres extensions for the causal examples $A_{\rm c}$--$C_{\rm c}$, shown as A--C inside the panels. Green dotted curves have constant $r/r_+$, and blue curves have constant static time in regions I and III. Red dashed lines mark the horizons, while black curves mark the future and past spacelike singularities. The parameters move the constant-radius curves, but the three examples have the same one-horizon topology.}
\label{fig:ks_three}
\end{figure}
The constant-time curves are drawn only in regions I and III, where $t$ is timelike. The spacelike singularities are true boundaries of the spacetime, and every coordinate curve ends there.

\subsection{Standard conformal coordinates}

The standard compact coordinates are
\begin{equation}
 \widehat U=\arctan U,\qquad \widehat V=\arctan V,
\end{equation}
with
\begin{equation}
 \widehat T=\widehat V+\widehat U,
 \qquad
 \widehat X=\widehat V-\widehat U.
\end{equation}
The complete diagrams are shown in Fig.~\ref{fig:cp_standard}. 
\begin{figure}[t]
\centering
\makebox[\textwidth][c]{\includegraphics[width=1.08\textwidth]{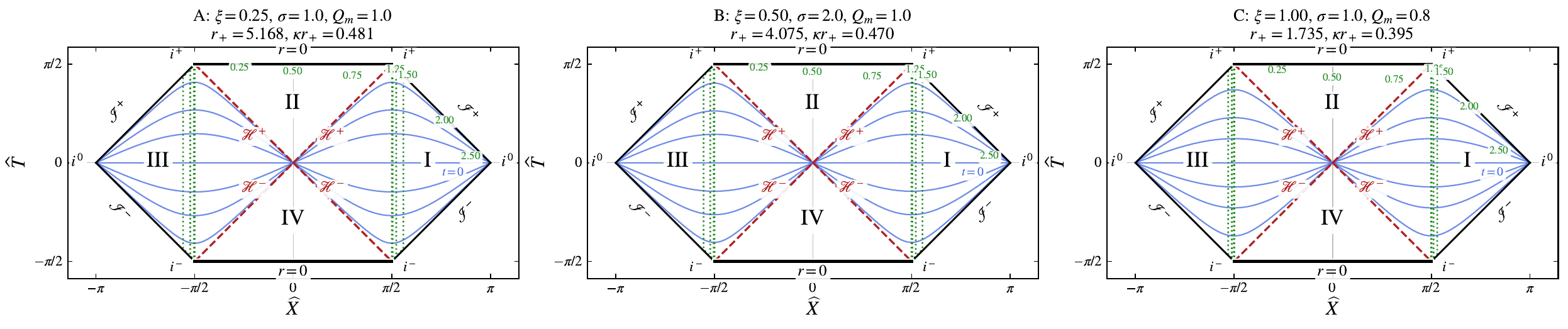}}
\caption{Standard Carter--Penrose diagrams for the same three causal examples. The full null and timelike infinities are included. Green dotted curves have constant $r/r_+$, and blue curves have constant static time in the two exterior regions. The singularities are spacelike and form the upper and lower boundaries.}
\label{fig:cp_standard}
\end{figure}
The asymptotic boundaries and constant-time curves terminate at the correct conformal infinities $i^0$, $i^+$, and $i^-$ in both exterior regions.

\subsection{Regular compact representation}

For a smoother representation, we also use
\begin{equation}
 \widetilde U=\arctan[\operatorname{arcsinh}(U)],
 \qquad
 \widetilde V=\arctan[\operatorname{arcsinh}(V)],
\end{equation}
followed by
\begin{equation}
 \widetilde T=\widetilde V+\widetilde U,
 \qquad
 \widetilde X=\widetilde V-\widetilde U.
\end{equation}
This map changes only the visual shape of the compact diagram and leaves the causal relations unchanged. Figure~\ref{fig:cp_regular} shows the corresponding representation. 
\begin{figure}[t]
\centering
\makebox[\textwidth][c]{\includegraphics[width=1.08\textwidth]{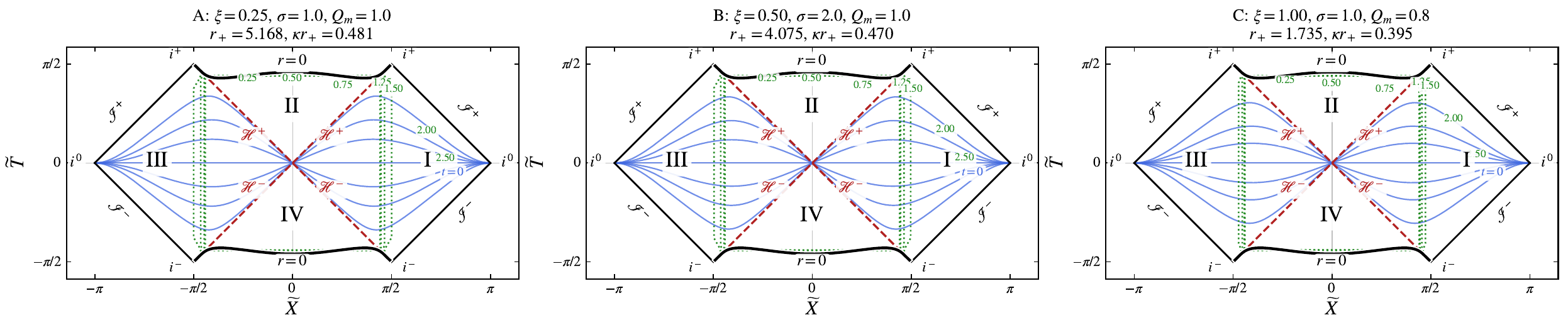}}
\caption{Regular compact Carter--Penrose representations. The causal structure is the same as in Fig.~\ref{fig:cp_standard}, but the spacelike singularities are drawn as smooth curves. All asymptotic and singular endpoints are included explicitly. No curve passes through a singular boundary.}
\label{fig:cp_regular}
\end{figure}
The singular boundaries and all asymptotic branches reach their proper conformal endpoints.

\subsection{Change of topology in the boundary model}

For $\xi=1$ and $M_0=0$, define
\begin{equation}
 \lambda=\frac{|Q_m|}{\sqrt{\beta}},
 \qquad
 \lambda_c=\frac{\sqrt{\sigma}}{2\sqrt{2}}.
\end{equation}
Near the center,
\begin{equation}
 f(r)=1-\frac{2\sqrt{2}\lambda}{\sqrt{\sigma}}
 +\frac{4r^2}{3\sigma\beta}+O(r^4).
\end{equation}
When $\lambda>\lambda_c$, the spacetime has one horizon and a spacelike singularity. When $\lambda<\lambda_c$, there is no horizon and the singularity is timelike and naked. At the boundary value $\lambda=\lambda_c$,
\begin{equation}
 f(r)=\frac{4r^2}{3\sigma\beta}+O(r^4),
 \qquad
 r_*=-\frac{3\sigma\beta}{4r}+O(r),
\end{equation}
the singularity becomes null. It remains a true curvature singularity and is not a regular extremal horizon.

\section{Thermodynamics}\label{sec:thermo}

\begin{figure}[!t]
\centering
\includegraphics[width=0.94\textwidth]{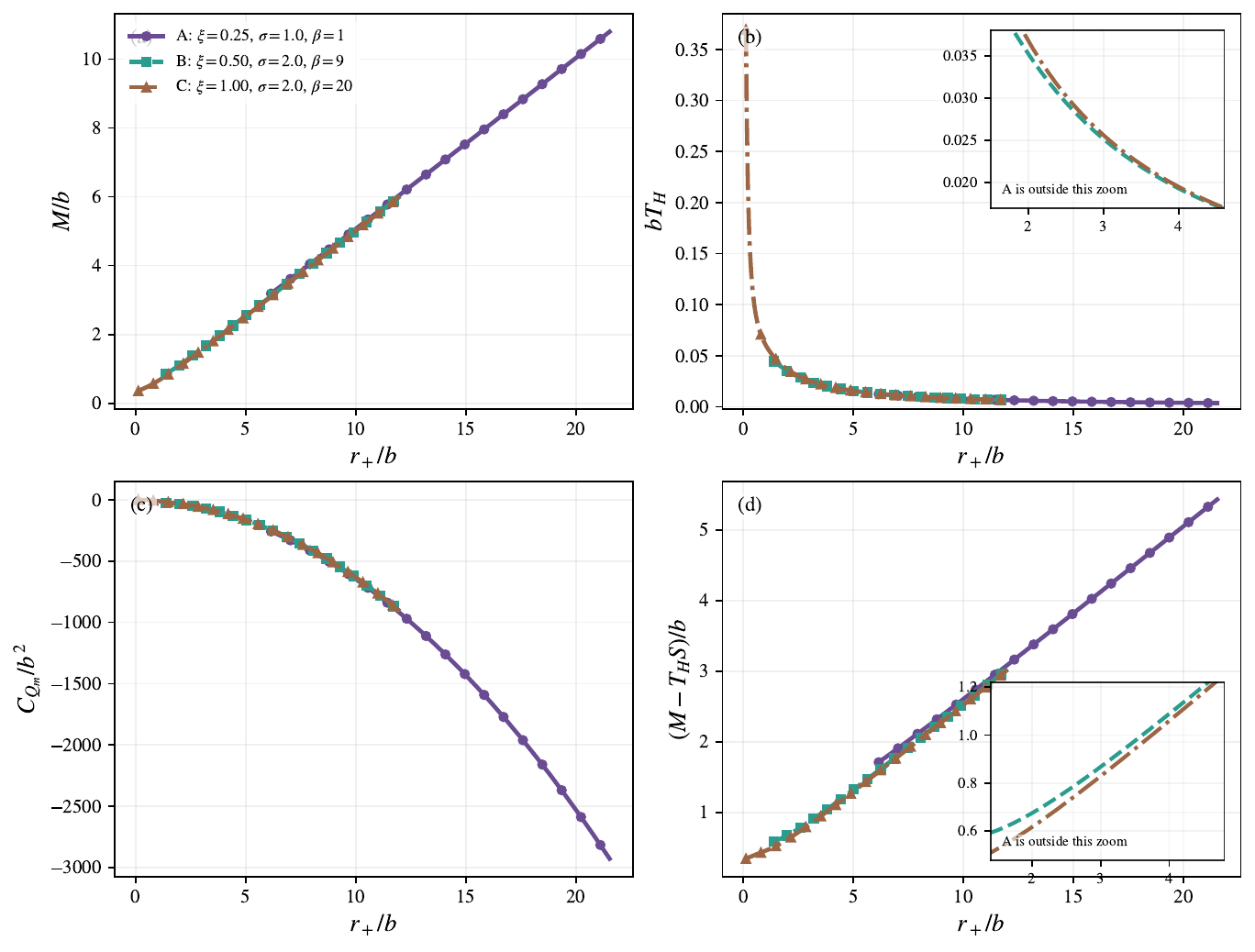}
\caption{Thermodynamic functions for three fixed-coupling families with $Q_m=1$. The scale is $b=(\beta Q_m^2/2)^{1/4}$. We use different line styles and markers because the full curves are close. The transparent insets show the small B--C separation in their common small-$r_+/b$ range; branch A starts outside these windows. For all three families, the temperature is positive and the fixed-charge heat capacity is negative on the branches shown.}
\label{fig:thermo}
\end{figure}

Let $r_+$ be the event-horizon radius. The horizon equation gives
\begin{equation}
 M_0=\frac{r_+}{2}-m_{\rm em}(r_+),
 \label{eq:M0h}
\end{equation}
where $m_{\rm em}=m-M_0$. The ADM mass can therefore be written as
\begin{equation}
 M(r_+,Q_m,\beta,\sigma,\xi)
 =\frac{r_+}{2}+\int_{r_+}^{\infty}r^2\cL(\cF)\,\dd r.
 \label{eq:Mh}
\end{equation}
The temperature and entropy are
\begin{equation}
 T_H=\frac{1-2r_+^2\cL(\cF_+)}{4\pi r_+},
 \qquad S=\pi r_+^2,
 \label{eq:TS}
\end{equation}
where $\cF_+=Q_m^2/(2r_+^4)$. These are the standard surface-gravity and area-law expressions of Einstein gravity \cite{Bekenstein1973,Hawking1975,Wald1993}. Together with the near-center expansion of the pure boundary branch, these expressions show that $T_H$ remains positive for every horizon at finite radius and vanishes continuously as $\lambda\to\lambda_c^{+}$.

At fixed charge and fixed couplings,
\begin{equation}
 \left(\frac{\partial M}{\partial r_+}\right)_{Q_m,\beta,\sigma,\xi}
 =\frac12-r_+^2\cL(\cF_+)
 =T_H\frac{\dd S}{\dd r_+}.
 \label{eq:firstlawcheck}
\end{equation}
When all parameters are varied, the first law takes the form
\begin{equation}
 \dd M=T_H\dd S+\Psi_m\dd Q_m
 +\Pi_{\beta}\dd\beta+\Pi_{\sigma}\dd\sigma+\Pi_{\xi}\dd\xi,
 \label{eq:firstlaw}
\end{equation}
with
\begin{align}
 \Psi_m&=Q_m\int_{r_+}^{\infty}\frac{\LF}{r^2}\,\dd r,
 \label{eq:potential}\\
 \Pi_a&=\int_{r_+}^{\infty}r^2\frac{\partial\cL}{\partial a}\,\dd r,
 \qquad a\in\{\beta,\sigma,\xi\}.
 \label{eq:couplingpotentials}
\end{align}
First laws with variable NLED couplings have been discussed in Refs.~\cite{Smarr1973,Breton2003,FernandoKrug2003,CaiPangWang2004,Fernando2006,ZhangGao2018,BokulicThermo2021,BalartFernando2021,CroneyEtAl2025,AbeEtAl2026}. The parameter $\beta$ has dimensions of length squared, while $\sigma$ and $\xi$ are dimensionless. Euler scaling then gives
\begin{equation}
 M=2T_HS+\Psi_m Q_m+2\beta\Pi_{\beta}.
 \label{eq:Smarr}
\end{equation}

The heat capacity at fixed charge can be written directly, without numerical differentiation:
\begin{equation}
 C_{Q_m}=\frac{2\pi r_+^2\left[1-2r_+^2\cL(\cF_+)\right]}
 {-1+8r_+^2\cF_+\LF(\cF_+)-2r_+^2\cL(\cF_+)}.
 \label{eq:CQ}
\end{equation}
To compare different nonlinear scales, we use
\begin{equation}
 b=\left(\frac{\beta Q_m^2}{2}\right)^{1/4}.
\end{equation}
Figure~\ref{fig:thermo} shows three representative fixed-coupling families. The temperature is positive and $C_{Q_m}$ is negative on all displayed branches. The insets resolve the small separation between the more nonlinear B and C families; branch A starts outside the zoomed ranges. These asymptotically flat branches are therefore locally unstable in the fixed-charge canonical ensemble. A statement about global thermodynamic stability would require a specified cavity or another external environment.

\section{Ordinary and extraordinary photon motion}\label{sec:photons}

In the geometric-optics limit, an $\cL(\cF)$ theory on a magnetic background has an ordinary branch and an extraordinary optical branch \cite{Boillat1970,NovelloEtAl2000,ObukhovRubilar2002,Schellstede2016,DePaulaEtAl2023,DePaulaEtAl2026,Escobar2026}. Similar effective-metric effects have been used in recent NLED shadow and imaging studies \cite{OkyayOvgun2022,UniyalPantigOvgun2023,UniyalEtAl2023,CimdikerEtAl2026}. After removing conformal factors, which do not change null trajectories, their line elements are
\begin{align}
 \dd s_{\rm ord}^2&=-f\dd t^2+\frac{\dd r^2}{f}+r^2\dd\Omega^2,
 \label{eq:ordmetric}\\
 \dd s_{\rm ext}^2&=-f\dd t^2+\frac{\dd r^2}{f}
 +\frac{r^2}{\kappaem(r)}\dd\Omega^2.
 \label{eq:extmetric}
\end{align}
The second metric follows from Eq.~\eqref{eq:geff}. Since $0<\kappaem\leq1$, its angular sector remains regular everywhere in the magnetic exterior.

It is useful to define
\begin{equation}
 C_{\rm ord}(r)=r^2,
 \qquad C_{\rm ext}(r)=\frac{r^2}{\kappaem(r)}.
 \label{eq:Cmodes}
\end{equation}
For equatorial null motion, the radial equation is
\begin{equation}
 \dot r^2=E^2-L^2\frac{f(r)}{C_s(r)},
 \qquad s\in\{\mathrm{ord},\mathrm{ext}\}.
 \label{eq:nullradial}
\end{equation}
An unstable circular photon orbit satisfies
\begin{equation}
 \left.\frac{\dd}{\dd r}\left(\frac{f}{C_s}\right)\right|_{r=r_{\rm ph}^{s}}=0,
 \label{eq:phcondition}
\end{equation}
and its critical impact parameter is
\begin{equation}
 (b_{\rm ph}^{s})^2=\left.\frac{C_s}{f}\right|_{r=r_{\rm ph}^{s}}.
 \label{eq:bcrit}
\end{equation}
For a distant observer, the geometrical shadow diameter is $2b_{\rm ph}^{s}$. The instability rate measured in coordinate time is
\begin{equation}
 (\lambda_s)^2=-\frac{f(r_{\rm ph}^{s})C_s(r_{\rm ph}^{s})}{2}
 \left.\frac{\dd^2}{\dd r^2}\left(\frac{f}{C_s}\right)\right|_{r=r_{\rm ph}^{s}}.
 \label{eq:lyapunov}
\end{equation}

Neutral disk matter follows the background metric. Its angular velocity, energy, and angular momentum on a circular orbit are
\begin{align}
 \Omega^2&=\frac{f'}{2r},
 \label{eq:Omega}\\
 E_c^2&=\frac{2f^2}{2f-rf'},
 \qquad
 L_c^2=\frac{r^3f'}{2f-rf'}.
 \label{eq:ELcirc}
\end{align}
The innermost stable circular orbit (ISCO) obeys
\begin{equation}
 rff''-2r(f')^2+3ff'=0.
 \label{eq:isco}
\end{equation}
These are standard relations for static spherical metrics. They also determine the inner edge used in thin disk models \cite{PageThorne1974,Luminet1979,UniyalPantigOvgun2023,UniyalEtAl2024}.

For the optical calculations, we set the ADM mass to $M=1$ and use the three parameter sets in Table~\ref{tab:models}; all have $M_0>0$. Model A is close to the Maxwell regime, while B and C have stronger nonlinear responses. The corresponding horizon, photon, shadow, and ISCO quantities are listed in Table~\ref{tab:optical}.

\begin{table}[htp]
\centering
\caption{Representative models used in the optical calculations. The ADM mass is $M=1$.}
\label{tab:models}
\begin{tabular}{ccccc}
\toprule
Model & $Q_m/M$ & $\beta/M^2$ & $\sigma$ & $\xi$\\
\midrule
A & 0.35 & 1  & 1 & 0.25\\
B & 0.60 & 9  & 2 & 0.50\\
C & 0.80 & 16 & 1 & 1.00\\
\bottomrule
\end{tabular}
\end{table}

\begin{table}[htp]
\centering
\caption{Horizon, photon, shadow, and ISCO quantities for the models in Table~\ref{tab:models}.}
\label{tab:optical}
\begin{tabular}{cccccccc}
\toprule
Model & $M_0/M$ & $r_+/M$ & $r_{\rm ph}^{\rm ord}/M$ & $r_{\rm ph}^{\rm ext}/M$ & $2b_{\rm ph}^{\rm ord}/M$ & $2b_{\rm ph}^{\rm ext}/M$ & $r_{\rm ISCO}/M$\\
\midrule
A & 0.445 & 1.937 & 2.916 & 2.917 & 10.175 & 10.178 & 5.812\\
B & 0.420 & 1.802 & 2.738 & 2.805 & 9.719 & 9.892 & 5.420\\
C & 0.474 & 1.617 & 2.493 & 2.694 & 9.102 & 9.603 & 4.895\\
\bottomrule
\end{tabular}
\end{table}

\begin{figure}[htp]
\centering
\includegraphics[width=0.94\textwidth]{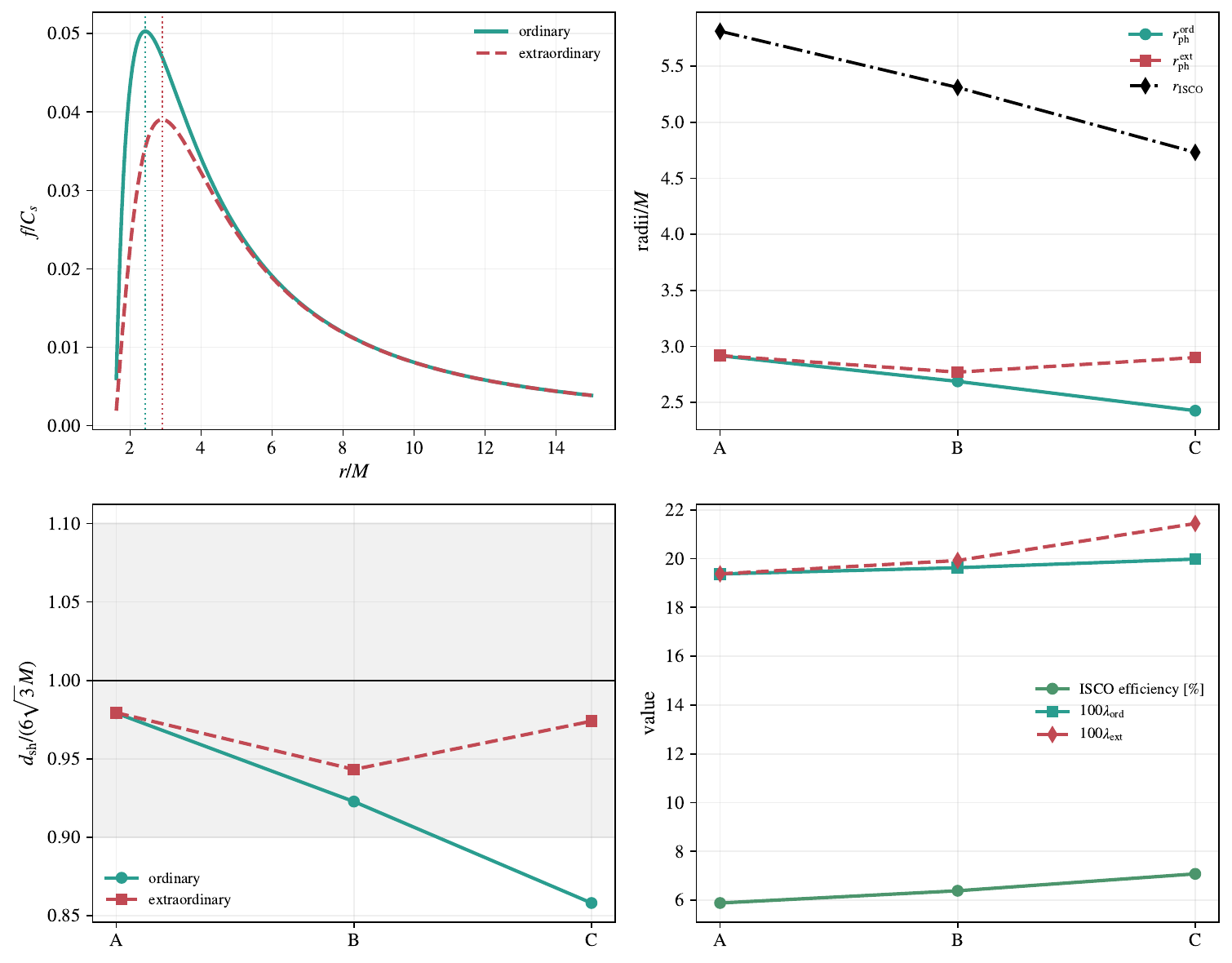}
\caption{Photon and circular-orbit observables. The upper-left panel shows the ordinary and extraordinary photon potentials for model C. The remaining panels compare the characteristic radii, shadow diameters, ISCO efficiency, and photon-orbit instability rates for models A--C. The shaded region is the simple $10\%$ shadow-size band used below.}
\label{fig:photonmotion}
\end{figure}

The ordinary shadow is smaller than the Schwarzschild diameter $6\sqrt{3}M$ by about $2.1\%$, $6.5\%$, and $12.4\%$ for models A, B, and C. The extraordinary branch changes the corresponding diameter by less than $0.1\%$, about $1.8\%$, and about $5.5\%$. Thus the optical correction is negligible in model A but becomes comparable to the geometric charge effect in model C. Over the same sequence, the background ISCO moves from $5.812M$ to $4.895M$, the extraordinary photon radius separates further from the ordinary one, and the ISCO efficiency rises modestly. Figure~\ref{fig:photonmotion} also shows a growing separation between the photon-orbit instability rates.

The outward shift follows from the effective angular function: because $0<\kappaem\leq1$, one has $C_{\rm ext}=r^2/\kappaem\geq C_{\rm ord}$ at the same radius, while the circular-orbit condition also changes the orbit location. Both Lyapunov exponents are positive, so all listed photon orbits are unstable. These rates describe the local instability of null rays and do not replace a coupled gravitational--electromagnetic perturbation analysis.

\section{Simple shadow bounds}\label{sec:shadow}

\begin{figure}[!t]
\centering
\includegraphics[width=0.92\textwidth]{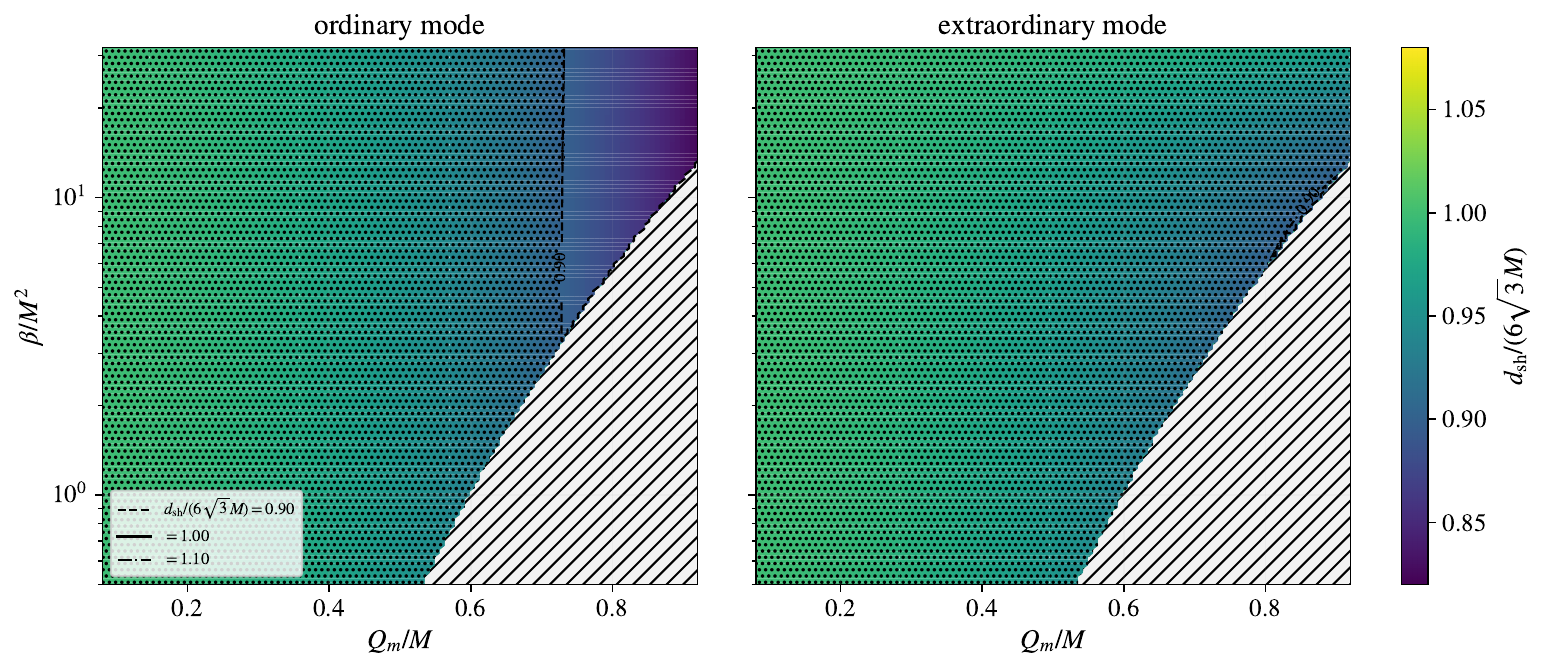}
\caption{Simple shadow-size scan for $\xi=0.5$ and $\sigma=2$. The color scale gives $d_{\rm sh}/(6\sqrt{3}M)$. The solid black contour shows the Schwarzschild value, and the labelled outer contours show the simple $10\%$ band of Eq.~\eqref{eq:ehtband}. The hatched lower-right region has $M_0<0$ and is excluded. The dotted overlay marks the part of the plane inside the illustrative band. In the ordinary panel, the nearly vertical $0.90$ contour shows a rapid but continuous exit from the band, not a discontinuity.}
\label{fig:shadowconstraints}
\end{figure}

An NLED shadow must be calculated from the characteristic geometry, not only from the background metric \cite{DePaulaEtAl2023,UniyalEtAl2023,DePaulaEtAl2026}. Recent optical studies have also used EHT data to constrain several NLED black hole families \cite{UniyalPantigOvgun2023,UniyalEtAl2024,RazaEtAl2024,LambiaseEtAl2025,CimdikerEtAl2026,UktamovEtAl2026}. When two optical branches are present, the result can also depend on polarization \cite{DePaulaEtAl2026}. For M87*, the EHT measured a ring diameter of $42\pm3\,\mu$as \cite{EHTM87VI2019}. For Sgr~A*, the calibrated image size was found to be within about $10\%$ of the Kerr prediction \cite{EHTSgrAVI2022}. The observed ring and the mathematical shadow are not the same object. We therefore use the Sgr~A* result only to define a simple illustrative band:
\begin{equation}
 0.9\leq\frac{d_{\rm sh}}{6\sqrt{3}M}\leq1.1.
 \label{eq:ehtband}
\end{equation}
This band is illustrative and is not a likelihood analysis.

Figure~\ref{fig:shadowconstraints} shows the scan for $\xi=0.5$, $\sigma=2$, and $M=1$. We exclude the hatched region because it has $M_0<0$. The ordinary shadow becomes smaller mainly as $Q_m/M$ increases. The extraordinary cone increases the critical impact parameter, especially at large $\beta/M^2$. It therefore partly compensates the geometrical reduction of the shadow. This effect is small at weak coupling, but it becomes visible in the stronger-coupling part of the plane. In the ordinary panel, the nearly vertical contour near $Q_m/M\simeq0.73$ is the lower edge, $d_{\rm sh}/(6\sqrt{3}M)=0.90$, of the illustrative band. It is not a discontinuity. The shadow size changes rapidly but continuously there and has only a weak dependence on $\beta/M^2$ over the plotted range.

The scan shows that the ordinary branch leaves the $10\%$ band as the magnetic charge grows. On the adopted grid, ordinary rays remain inside the band up to about $Q_m/M\simeq0.72$, while extraordinary rays still have allowed points up to about $Q_m/M\simeq0.92$. These grid-dependent values are not observational constraints; they only show that the effective optical branch matters in comparisons with EHT size information.

\section{Sample photon trajectories}\label{sec:rays}

\begin{figure}[!t]
\centering
\includegraphics[width=0.97\textwidth]{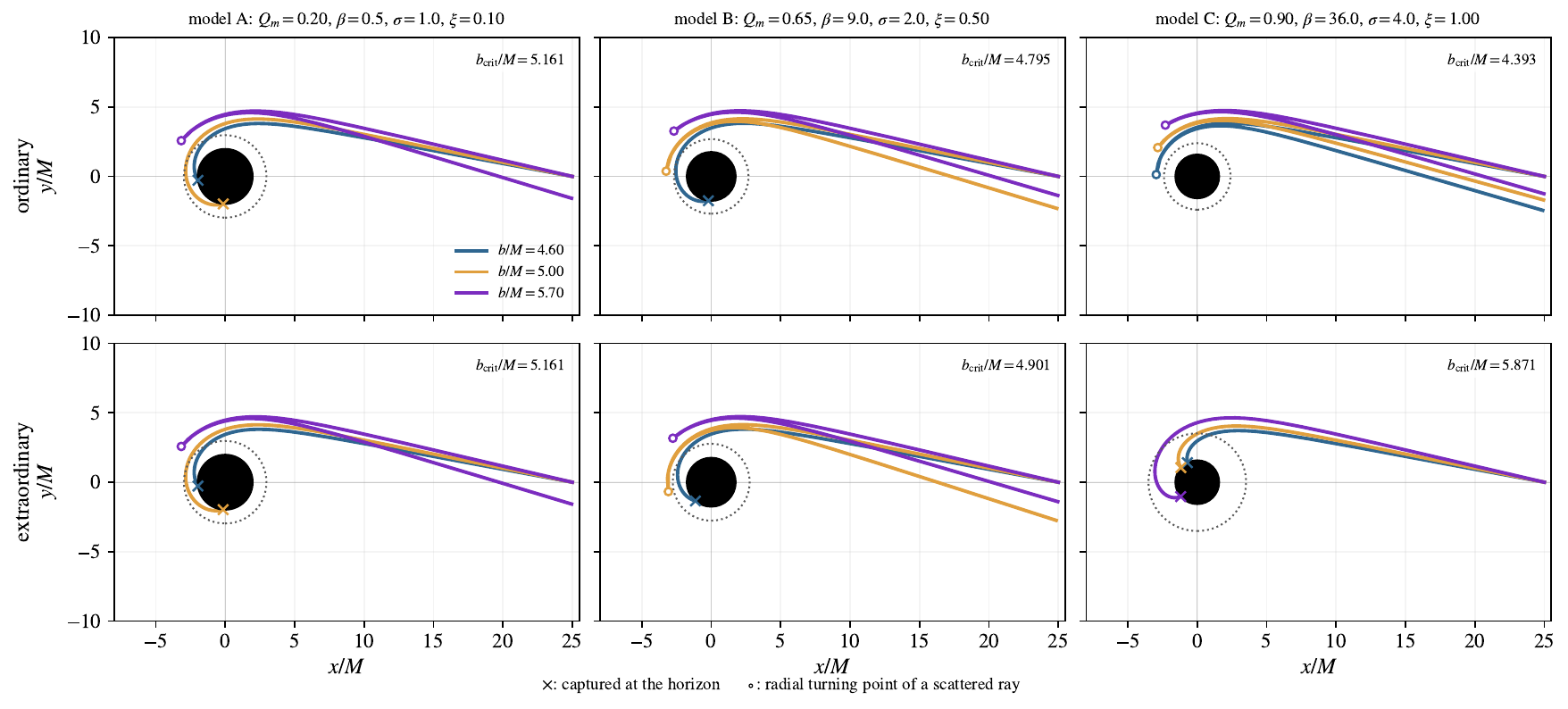}
\caption{Equatorial photon trajectories for the diagnostic sets $A_{\rm tr}$--$C_{\rm tr}$. The upper and lower rows show ordinary and extraordinary rays. Every panel uses the same impact parameters, $b/M=4.60$, $5.00$, and $5.70$. A cross marks a ray captured by the horizon, while an open circle marks the turning point of a scattered ray. The black disk is the event horizon, and the dotted circle shows the unstable photon-orbit radius.}
\label{fig:rays}
\end{figure}

Before turning to disk images, we first look at a few individual null rays. If every panel were normalized by its own critical impact parameter, some differences would be hidden. We therefore use the same values, $b/M=4.60$, $5.00$, and $5.70$, in every panel of Fig.~\ref{fig:rays}. For this diagnostic, the three parameter sets are
\begin{align}
 A_{\rm tr}&:(Q_m,\beta,\sigma,\xi)=(0.20,0.5,1,0.10),\\
 B_{\rm tr}&:(Q_m,\beta,\sigma,\xi)=(0.65,9,2,0.50),\\
 C_{\rm tr}&:(Q_m,\beta,\sigma,\xi)=(0.90,36,4,1.00).
\end{align}
These diagnostic sets are more widely separated than the optical benchmarks in Table~\ref{tab:models}. The columns show the three sets, and the rows show the ordinary and extraordinary branches. Because the critical impact parameter depends on both the model and the optical branch, the same incoming ray can be captured in one panel and scattered in another.

The weakly nonlinear set $A_{\rm tr}$ gives almost the same trajectories in both branches. The separation grows for $B_{\rm tr}$ and is largest for $C_{\rm tr}$, where the extraordinary critical impact parameter shifts strongly. The strongest set is used only to display this effect and is not an observational best fit.

\section{ISCO-truncated synthetic images and branch comparison}\label{sec:images}

Thin-disk and optically thin images are widely used as controlled probes of strong-field geometry \cite{Luminet1979,PageThorne1974,GrallaEtAl2019,UniyalPantigOvgun2023,UniyalEtAl2023,UniyalEtAl2024,UktamovEtAl2026}. We use a simple proxy model to isolate the difference between the two NLED photon branches without introducing detailed plasma physics.

We construct images of a geometrically thin equatorial disk. The disk matter is neutral and follows circular orbits of the background metric. The inner edge is placed at $r_{\rm ISCO}$, the outer edge at $40M$, and the emitted bolometric intensity is
\begin{equation}
 I_{\rm em}(r)\propto\left(\frac{r_{\rm ISCO}}{r}\right)^2
 \exp\!\left[-\frac{r-r_{\rm ISCO}}{30M}\right].
 \label{eq:emissivity}
\end{equation}
The images are produced with a ray-plane method implemented in custom Numba-accelerated Python code. We launch photons from an observer at $r_o=100M$ and propagate them in either the ordinary or extraordinary optical geometry. Each time a ray crosses the disk, its contribution is added to the intensity. Multiple crossings are kept with decreasing weights. Along each ray, Liouville's theorem gives
\begin{equation}
 I_{\rm obs}=g^4 I_{\rm em},
 \label{eq:intensity}
\end{equation}
where
\begin{equation}
 g=\frac{1}{u^t(1-\Omega\lambda)},
 \qquad
 u^t=\frac{1}{\sqrt{f-r^2\Omega^2}}.
 \label{eq:redshift}
\end{equation}
Here $\lambda=p_\phi/E$ is the conserved azimuthal impact parameter. The disk follows the background metric, while photons follow the optical metric of their branch. These are transparent thin-disk proxy images, not full general-relativistic radiative-transfer calculations \cite{Luminet1979,PageThorne1974,GrallaEtAl2019}. We use them to isolate the geometrical effect of the NLED characteristic cone.

We use the three models of Table~\ref{tab:models} and the inclinations $\theta_o=30^\circ$, $60^\circ$, and $80^\circ$. All panels cover the same screen range, $-22.5\leq X/M,Y/M\leq22.5$. For each model and inclination, the ordinary and extraordinary maps share one display scale, while all numerical comparisons are calculated from the raw intensity arrays. This joint normalization prevents small branch-dependent brightness changes from being hidden.

Figure~\ref{fig:images_ord} shows the ordinary branch. The image is nearly circular at $30^\circ$, while the Doppler-bright side and lensed lower arc become clearer at $60^\circ$. At $80^\circ$, the disk is almost edge-on and the ISCO-truncated inner edge is more visible. Inclination changes the morphology much more strongly than the differences among the three models. The dotted circle marks the calculated ordinary critical curve; no artificial intensity ring is added.
\begin{figure}[tbp]
\centering
\includegraphics[width=0.98\textwidth]{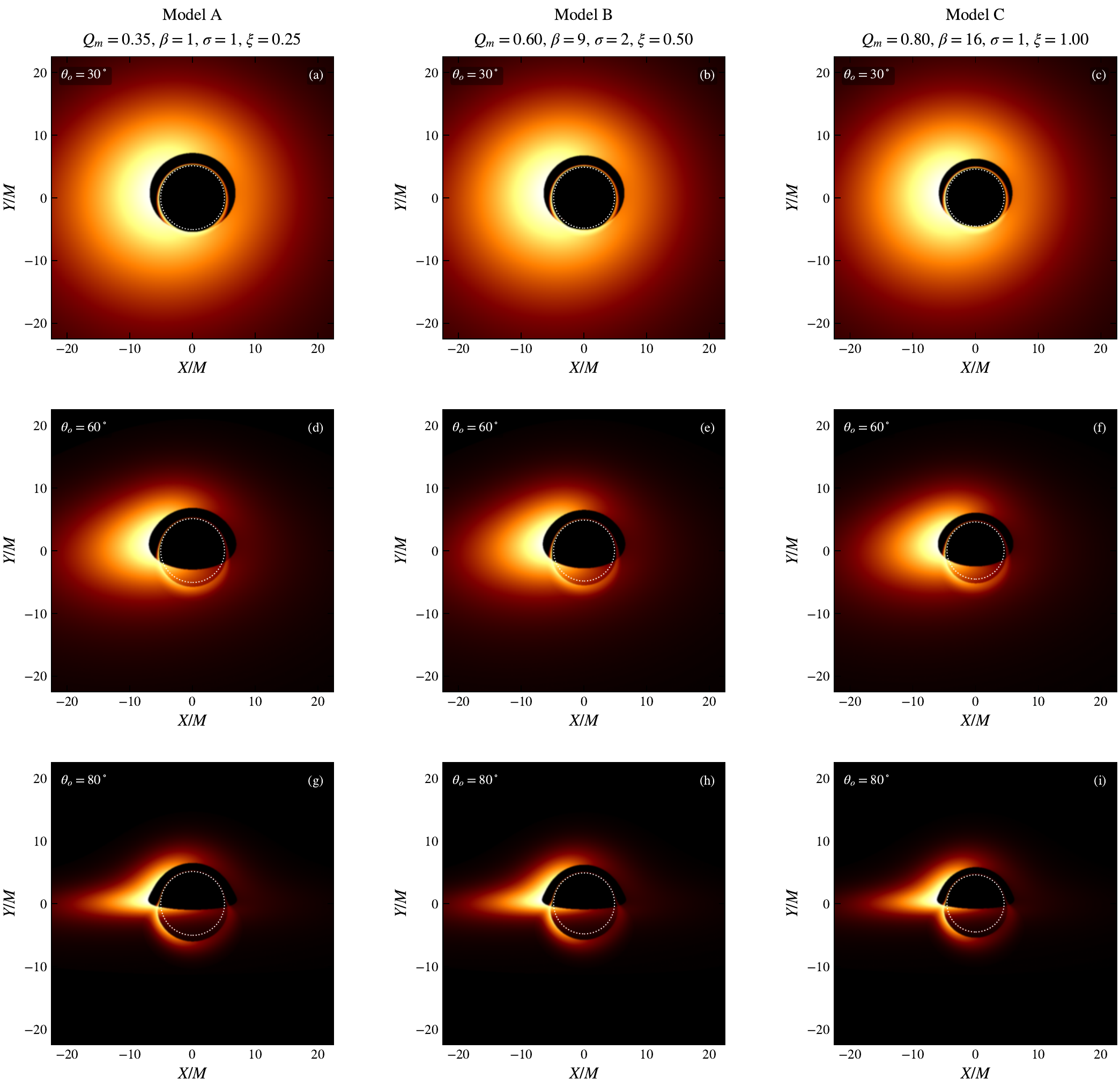}
\caption{ISCO-truncated thin-disk images for the ordinary branch. Columns show models A--C, and rows show $\theta_o=30^\circ$, $60^\circ$, and $80^\circ$. All panels use the same screen coordinates. For every fixed model and inclination, this image and its extraordinary partner use the same display scale. The dotted white circle marks the ordinary critical curve.}
\label{fig:images_ord}
\end{figure}

The full ordinary and extraordinary images are very similar, especially for model A. We therefore compare the branches through the residual map
\begin{equation}
 {\cal R}(X,Y)=
 \frac{I_{\rm ext}(X,Y)-I_{\rm ord}(X,Y)}
 {I_{\max}^{\rm joint}},
 \qquad
 I_{\max}^{\rm joint}
 =\max_{\substack{X,Y\\ s\in\{{\rm ord},{\rm ext}\}}} I_s(X,Y),
 \label{eq:image_residual}
\end{equation}
where the maximum is taken over both raw maps of the same model and inclination. Positive residuals mean that the extraordinary image is brighter at that screen position, while negative residuals mean that the ordinary image is brighter.

We also use two integrated quantities,
\begin{equation}
 \Delta_I=
 \frac{\int |I_{\rm ext}-I_{\rm ord}|\,\dd X\,\dd Y}
 {\int I_{\rm ord}\,\dd X\,\dd Y},
 \qquad
 R_F=
 \frac{\int I_{\rm ext}\,\dd X\,\dd Y}
 {\int I_{\rm ord}\,\dd X\,\dd Y}.
 \label{eq:image_metrics}
\end{equation}
The first quantity measures the total absolute image difference, while the second compares the total flux. Figure~\ref{fig:images_res} shows that model A is almost unchanged, model B develops a small coherent residual near the lensed inner edge, and model C gives the largest difference. This trend agrees with the critical-curve shifts in Table~\ref{tab:optical}. The residual is localized near the bright inner structures and the critical region rather than forming a new global morphology.
\begin{figure}[tbp]
\centering
\includegraphics[width=0.98\textwidth]{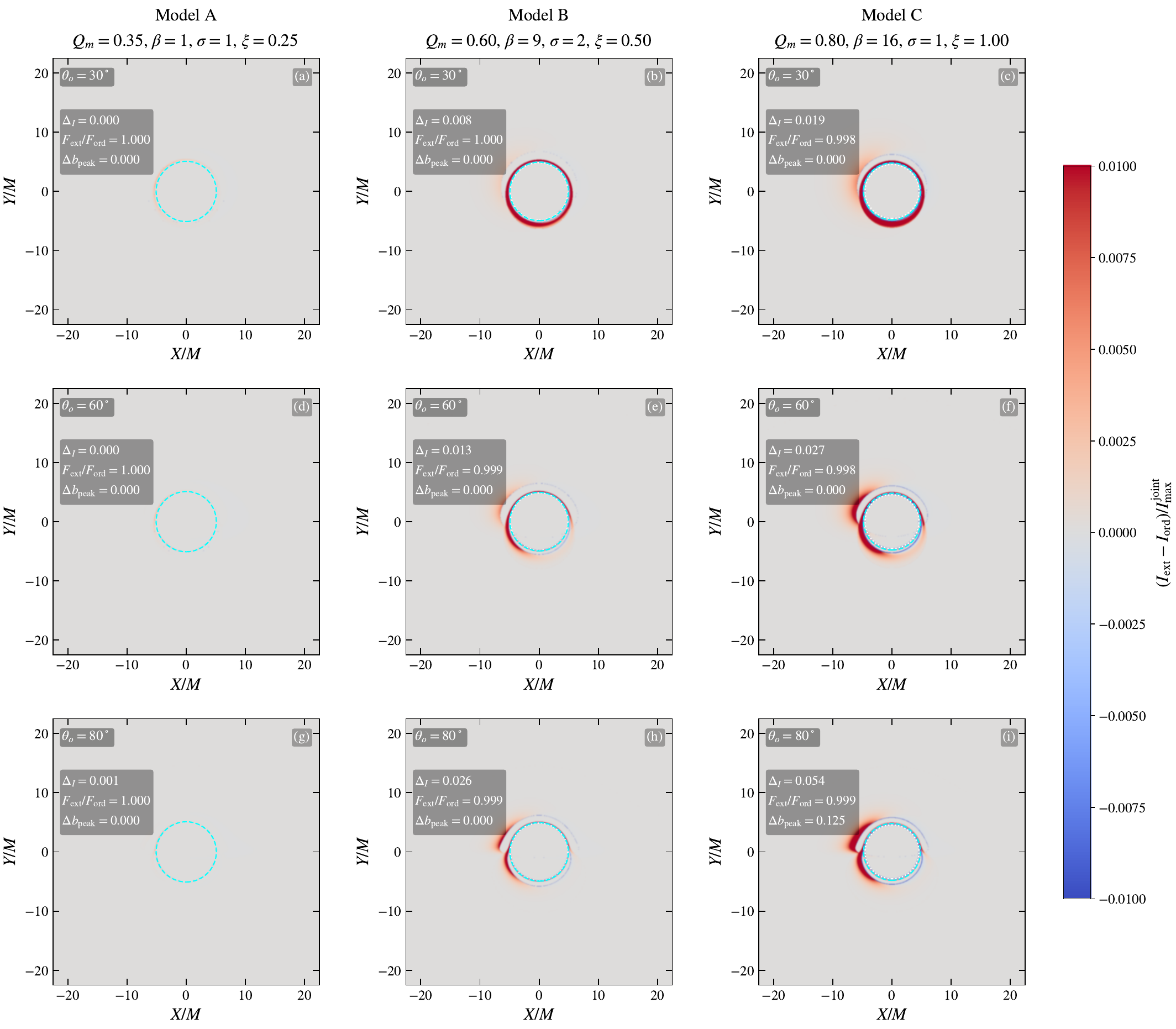}
\caption{Residual maps between the extraordinary and ordinary raw images, defined by Eq.~\eqref{eq:image_residual}. Red regions are brighter in the extraordinary image and blue regions are brighter in the ordinary image. White dotted and cyan dashed circles mark the ordinary and extraordinary critical curves. Each panel also reports $\Delta_I$, $R_F$, and the radial peak displacement obtained from the same raw arrays. A common residual color scale is used for all panels.}
\label{fig:images_res}
\end{figure}

We also calculate the azimuthally averaged profile
\begin{equation}
 \overline I_s(b)=\frac{1}{2\pi}\int_0^{2\pi} I_s(b,\varphi)\,\dd\varphi,
 \qquad s\in\{{\rm ord},{\rm ext}\},
 \label{eq:radial_profile}
\end{equation}
and define
\begin{equation}
 \Delta b_{\rm peak}=b_{\rm peak}^{\rm ext}-b_{\rm peak}^{\rm ord}.
 \label{eq:peak_shift}
\end{equation}
The main profiles in Fig.~\ref{fig:images_prof} nearly overlap. The transparent insets magnify the regions of largest difference, and the marked boxes identify the enlarged intervals. The branch dependence grows from A to C but remains a small correction to the inclination-driven image structure.
\begin{figure}[tbp]
\centering
\includegraphics[width=0.98\textwidth]{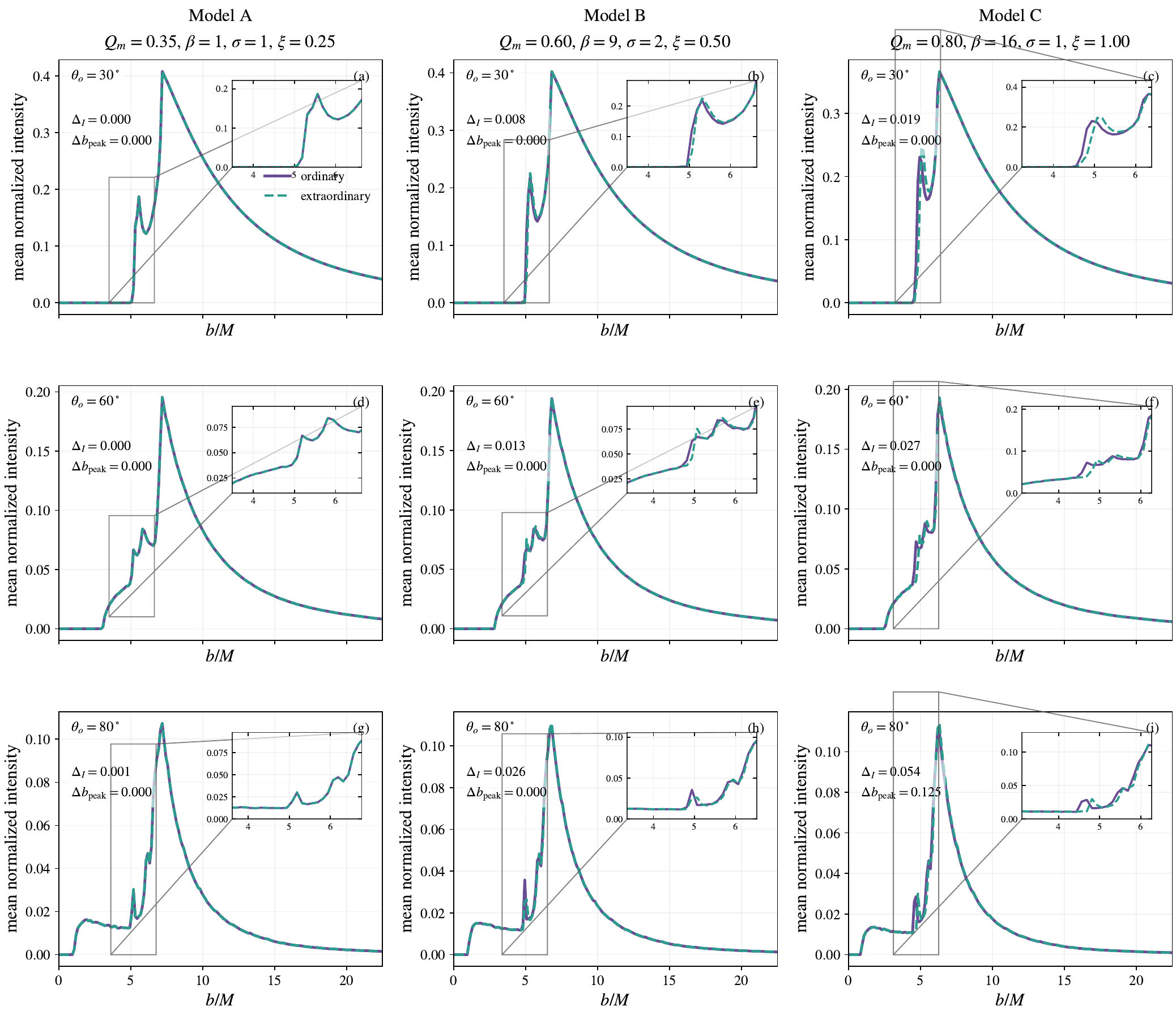}
\caption{Azimuthally averaged radial intensity profiles for the ordinary and extraordinary branches. Columns show models A--C and rows show the three inclinations. The profiles are obtained from the raw maps after division by the same joint peak intensity. Transparent insets magnify the region with the largest local difference; the corresponding interval is marked in each main panel.}
\label{fig:images_prof}
\end{figure}

The residual follows the lensed inner edge and the near-critical arcs rather than appearing as isolated pixels. A sign change across a bright feature is expected when the extraordinary branch shifts that feature slightly outward: one side loses intensity while the neighboring side gains it. The red--blue pairs therefore indicate a displacement or redistribution, not a new emission component. The flux ratio $R_F$ can remain close to unity while $\Delta_I$ is nonzero because positive and negative local changes partly cancel. Residual maps are therefore more informative than a direct visual comparison of the two full images.

The value of $\Delta b_{\rm peak}$ is limited by the radial bin width and can be zero when the two maxima fall in the same bin, even if the profiles differ inside the zoomed window. The critical impact parameters in Table~\ref{tab:optical} provide a cleaner measure of the geometric displacement, while the profiles show how it appears in the image intensity. Overall, the extraordinary branch does not create a different global disk morphology in these static spherical examples. Its main effect is a small redistribution of intensity near the lensed inner edge and an outward shift of the critical scale.

\FloatBarrier

\section{Discussion and conclusions}\label{sec:discussion}

We have introduced a class of NLED theories built from positive mixtures of concave power kernels. Finite electromagnetic self-energy requires $\gamma>1/4$, while causal magnetic propagation requires $\gamma\leq1/2$. Their intersection gives the interval $1/4<\gamma\leq1/2$. Because the mixture is positive, the relevant inequalities are preserved term by term. The pair $(\gamma_1,\gamma_2)=(1/3,1/2)$ is a minimal analytically tractable representative, not a unique choice.

For this two-kernel model, the static field equations admit an exact solution with finite electromagnetic self-energy and a singular center. The conditions $\LF>0$, $\LFF<0$, and $\Phi>0$ keep the extraordinary characteristic cone well defined and subluminal. The weak, dominant, and strong energy conditions also hold, together with the known sufficient stability conditions in the exterior \cite{MorenoSarbach2003,NomuraYoshidaSoda2020,RussoTownsend2024}. The central analytical result is the strict monotonicity of the metric function for $M_0\geq0$. In this sector, the solution has at most one positive-radius horizon, no inner Cauchy horizon, and no extremal horizon at finite radius. Thus magnetic charge does not force a Reissner--Nordstr\"om-like interior, even though the weak-field asymptotics recover the expected charged form \cite{HaleEtAl2026}.

The pure boundary branch displays the change of causal structure directly. Above the critical charge-to-scale ratio, the spacetime has one horizon and a spacelike singularity. Below it, the horizon disappears and the singularity is timelike and naked. At the threshold, the singularity becomes null. The center remains a curvature singularity throughout this transition; this one-horizon boundary family has no Cauchy horizon but does not regularize the central region \cite{DeFeliceTsujikawa2025,BokulicEtAl2026,RussoTownsend2026}.

The thermodynamic results are consistent with this geometric picture. The Hawking temperature is positive for every horizon at finite radius and tends to zero as the boundary branch approaches its critical endpoint. The representative fixed-coupling families have negative fixed-charge heat capacity and are therefore locally unstable in the asymptotically flat canonical ensemble. The extended first law and Smarr relation consistently include the magnetic charge and nonlinear couplings \cite{ZhangGao2018,BokulicThermo2021,AbeEtAl2026}.

The optical sector also shows why the spacetime metric and the electromagnetic effective metric must be treated separately. Magnetic charge reduces the ordinary critical scale, while the extraordinary branch shifts the critical curve outward and partly compensates for this reduction. The two branches also have different photon-orbit radii and instability rates. The simple shadow scan is only illustrative, but it shows that the effective optical branch can change the interpretation of an EHT-sized critical curve \cite{DePaulaEtAl2026,CimdikerEtAl2026}. In the thin-disk proxy images, inclination controls the global morphology. The extraordinary branch instead produces a small, coherent redistribution of intensity near the lensed inner edge and the critical region. Residual maps and radial profiles reveal this effect without identifying the critical curve with the bright lensing structures of the disk \cite{Luminet1979,GrallaEtAl2019}.

The present analysis is limited to static spherical symmetry, sufficient exterior stability criteria, an illustrative shadow-size comparison, and a simplified emission model. A direct calculation of the coupled gravitational--electromagnetic perturbation spectrum would provide a stronger stability test \cite{MorenoSarbach2003,NomuraYoshidaSoda2020,DaghighEtAl2022,FathiGuzmanVillanueva2026,LiangEtAl2026}. It would also be useful to study curvature components in parallel-propagated frames and to derive a rotating counterpart directly from the field equations \cite{GarciaDiaz2022,AyonBeato2024,ChengEtAl2025,HaleRotating2026}. More realistic observational tests will require polarized radiative transfer and a statistical treatment of the source parameters \cite{EHTSgrAVI2022,DePaulaEtAl2026}.

The main contribution is a constructive link between physical restrictions on an NLED Lagrangian and exact geometric consequences. The model gives a single framework for studying electromagnetic causality, global structure, thermodynamics, and birefringent optics.

\medskip
\noindent\textbf{Acknowledgements.}
M.F. acknowledges financial support from the Agencia Nacional de Investigaci\'{o}n y Desarrollo (ANID), Chile, through the FONDECYT Postdoctoral project No. 3260029. J.R.V. is partially supported by the Centro de F\'isica Te\'orica de Valpara\'iso (CeFiTeV).

\medskip
\noindent\textbf{Competing interests.}
The authors declare that they have no competing interests.

\medskip
\noindent\textbf{Data availability statement.}
No observational dataset was generated or reanalyzed in this work. Published EHT size estimates were used only to define an illustrative shadow-size band; no statistical fit was performed. The numerical data underlying the tables and figures, including the optical table, raw intensity arrays, residual maps, radial profiles, and image-comparison quantities, are available from the corresponding author upon reasonable request.

\medskip
\noindent\textbf{Code availability statement.}
The Python notebooks and scripts used for the analytical checks, numerical calculations, shadow scans, photon trajectories, and image analysis are available from the corresponding author upon reasonable request. The numerical work used standard open-source Python tools \cite{Meurer2017,Harris2020,Virtanen2020,Hunter2007}. The image code reproduces the jointly scaled images, residual maps, radial profiles, and numerical comparison quantities from the raw intensity arrays.

\FloatBarrier
\appendix

\section{Self-energy integral}\label{app:energy}

For one kernel, set $t=b^4/r^4$. Then
\begin{align}
 I={}&\int_0^\infty r^2
 \left[\left(1+\frac{b^4}{r^4}\right)^{1-\gamma}-1\right]\dd r
 \nonumber\\
 ={}&\frac{b^3}{4}\int_0^\infty
 t^{-7/4}\left[(1+t)^{1-\gamma}-1\right]\dd t.
\end{align}
For $\gamma>1/4$, integration by parts gives
\begin{align}
 I={}&\frac{b^3(1-\gamma)}{3}
 \int_0^\infty t^{-3/4}(1+t)^{-\gamma}\dd t
 \nonumber\\
 ={}&\frac{b^3(1-\gamma)}{3}
 B\!\left(\frac14,\gamma-\frac14\right),
\end{align}
where the last equality follows from Euler's beta integral \cite{DLMF,GradshteynRyzhik2015}.

\section{A simple check of the Smarr formula}\label{app:smarr}

Under the scaling
\begin{equation}
 r_+\to\lambda r_+,
 \qquad Q_m\to\lambda Q_m,
 \qquad \beta\to\lambda^2\beta,
\end{equation}
we obtain $M\to\lambda M$ and $S\to\lambda^2S$. The parameters $\sigma$ and $\xi$ remain unchanged. Euler's theorem then gives
\begin{equation}
 M=2S\frac{\partial M}{\partial S}
 +Q_m\frac{\partial M}{\partial Q_m}
 +2\beta\frac{\partial M}{\partial\beta},
\end{equation}
Using the first law, this becomes Eq.~\eqref{eq:Smarr}.

\bibliography{refs}

@article{Born1933,
  author = {Born, Max},
  title = {Modified field equations with a finite radius of the electron},
  journal = {Nature},
  volume = {132},
  pages = {282},
  year = {1933},
  doi = {10.1038/132282a0}
}

@article{BornInfeld1934,
  author = {Born, Max and Infeld, Leopold},
  title = {Foundations of the new field theory},
  journal = {Proc. R. Soc. Lond. A},
  volume = {144},
  pages = {425--451},
  year = {1934},
  doi = {10.1098/rspa.1934.0059}
}

@article{HeisenbergEuler1936,
  author = {Heisenberg, Werner and Euler, Hans},
  title = {Folgerungen aus der {Dirac}schen Theorie des Positrons},
  journal = {Z. Phys.},
  volume = {98},
  pages = {714--732},
  year = {1936},
  doi = {10.1007/BF01343663}
}

@book{Plebanski1970,
  author = {Pleba\'nski, Jerzy},
  title = {Lectures on Non-Linear Electrodynamics},
  year = {1970},
  publisher = {NORDITA},
  address = {Copenhagen}
}

@article{Boillat1970,
  author = {Boillat, Guy},
  title = {Nonlinear electrodynamics: Lagrangians and equations of motion},
  journal = {J. Math. Phys.},
  volume = {11},
  pages = {941--951},
  year = {1970},
  doi = {10.1063/1.1665231}
}

@incollection{BialynickiBirula1984,
  author = {Bia\l{}ynicki-Birula, Iwo and Bia\l{}ynicka-Birula, Zofia},
  title = {Nonlinear effects in quantum electrodynamics},
  booktitle = {Quantum Electrodynamics},
  editor = {Kinoshita, T.},
  pages = {226--310},
  year = {1990},
  publisher = {World Scientific}
}

@article{NovelloEtAl2000,
  author = {Novello, M. and De Lorenci, V. A. and Salim, J. M. and Klippert, R.},
  title = {Geometrical aspects of light propagation in nonlinear electrodynamics},
  journal = {Phys. Rev. D},
  volume = {61},
  pages = {045001},
  year = {2000},
  doi = {10.1103/PhysRevD.61.045001},
  eprint = {gr-qc/9911085}
}

@article{ObukhovRubilar2002,
  author = {Obukhov, Yuri N. and Rubilar, Guillermo F.},
  title = {Fresnel analysis of wave propagation in nonlinear electrodynamics},
  journal = {Phys. Rev. D},
  volume = {66},
  pages = {024042},
  year = {2002},
  doi = {10.1103/PhysRevD.66.024042},
  eprint = {gr-qc/0204028}
}

@article{ShabadUsov2011,
  author = {Shabad, Anatoly E. and Usov, Vladimir V.},
  title = {Effective {Lagrangian} in nonlinear electrodynamics and its properties of causality and unitarity},
  journal = {Phys. Rev. D},
  volume = {83},
  pages = {105006},
  year = {2011},
  doi = {10.1103/PhysRevD.83.105006},
  eprint = {1101.2343}
}

@article{Schellstede2016,
  author = {Schellstede, Gerold O. and Perlick, Volker and L\"ammerzahl, Claus},
  title = {On causality in nonlinear vacuum electrodynamics of the {Pleba\'nski} class},
  journal = {Ann. Phys. (Berlin)},
  volume = {528},
  pages = {738--749},
  year = {2016},
  doi = {10.1002/andp.201600124},
  eprint = {1604.02545}
}

@article{GibbonsHerdeiro2001,
  author = {Gibbons, G. W. and Herdeiro, C. A. R.},
  title = {{Born--Infeld} theory and stringy causality},
  journal = {Phys. Rev. D},
  volume = {63},
  pages = {064006},
  year = {2001},
  doi = {10.1103/PhysRevD.63.064006},
  eprint = {hep-th/0008052}
}

@article{RussoTownsend2024,
  author = {Russo, Jorge G. and Townsend, Paul K.},
  title = {Causality and energy conditions in nonlinear electrodynamics},
  journal = {J. High Energy Phys.},
  volume = {06},
  number = {2024},
  eid = {191},
  year = {2024},
  doi = {10.1007/JHEP06(2024)191},
  eprint = {2404.09994}
}

@article{AbeEtAl2026,
  author = {Abe, Yoshihiko and M\'edevielle, Maxime and Noumi, Toshifumi and Yoshimura, Kaho},
  title = {Causality constraints on black hole thermodynamics in nonlinear electrodynamics},
  journal = {J. High Energy Phys.},
  volume = {06},
  number = {2026},
  eid = {007},
  year = {2026},
  doi = {10.1007/JHEP06(2026)007},
  eprint = {2505.23483}
}

@article{RussoTownsend2026,
  author = {Russo, Jorge G. and Townsend, Paul K.},
  title = {Black holes and causal nonlinear electrodynamics},
  journal = {arXiv e-prints},
  year = {2026},
  eprint = {2601.07789}
}

@article{Sorokin2022,
  author = {Sorokin, Dmitri P.},
  title = {Introductory notes on non-linear electrodynamics and its applications},
  journal = {Fortschr. Phys.},
  volume = {70},
  pages = {2200092},
  year = {2022},
  doi = {10.1002/prop.202200092},
  eprint = {2112.12118}
}

@article{Bronnikov2001,
  author = {Bronnikov, Kirill A.},
  title = {Regular magnetic black holes and monopoles from nonlinear electrodynamics},
  journal = {Phys. Rev. D},
  volume = {63},
  pages = {044005},
  year = {2001},
  doi = {10.1103/PhysRevD.63.044005},
  eprint = {gr-qc/0006014}
}

@incollection{Bronnikov2022,
  author = {Bronnikov, Kirill A.},
  title = {Regular black holes sourced by nonlinear electrodynamics},
  booktitle = {Regular Black Holes: Towards a New Paradigm of Gravitational Collapse},
  editor = {Bambi, Cosimo},
  series = {Springer Series in Astrophysics and Cosmology},
  pages = {37--67},
  year = {2023},
  publisher = {Springer},
  address = {Singapore},
  doi = {10.1007/978-981-99-1596-5_2},
  eprint = {2211.00743}
}

@article{AyonBeatoGarcia1998,
  author = {Ay\'on-Beato, Eloy and Garc\'ia, Alberto},
  title = {Regular black hole in general relativity coupled to nonlinear electrodynamics},
  journal = {Phys. Rev. Lett.},
  volume = {80},
  pages = {5056--5059},
  year = {1998},
  doi = {10.1103/PhysRevLett.80.5056},
  eprint = {gr-qc/9911046}
}

@article{AyonBeatoGarcia1999,
  author = {Ay\'on-Beato, Eloy and Garc\'ia, Alberto},
  title = {New regular black hole solution from nonlinear electrodynamics},
  journal = {Phys. Lett. B},
  volume = {464},
  pages = {25--29},
  year = {1999},
  doi = {10.1016/S0370-2693(99)01038-2},
  eprint = {hep-th/9911174}
}

@article{AyonBeatoGarcia2000,
  author = {Ay\'on-Beato, Eloy and Garc\'ia, Alberto},
  title = {The {Bardeen} model as a nonlinear magnetic monopole},
  journal = {Phys. Lett. B},
  volume = {493},
  pages = {149--152},
  year = {2000},
  doi = {10.1016/S0370-2693(00)01125-4},
  eprint = {gr-qc/0009077}
}

@article{FanWang2016,
  author = {Fan, Zhong-Ying and Wang, Xiaobao},
  title = {Construction of regular black holes in general relativity},
  journal = {Phys. Rev. D},
  volume = {94},
  pages = {124027},
  year = {2016},
  doi = {10.1103/PhysRevD.94.124027},
  eprint = {1610.02636}
}

@article{BronnikovComment2017,
  author = {Bronnikov, Kirill A.},
  title = {Comment on ``Construction of regular black holes in general relativity''},
  journal = {Phys. Rev. D},
  volume = {96},
  pages = {128501},
  year = {2017},
  doi = {10.1103/PhysRevD.96.128501},
  eprint = {1712.04342}
}

@article{BokulicEtAl2024,
  author = {Bokuli\'c, Ana and Franzin, Edgardo and Juri\'c, Tajron and Smoli\'c, Ivica},
  title = {Lagrangian reverse engineering for regular black holes},
  journal = {Phys. Lett. B},
  volume = {853},
  pages = {138750},
  year = {2024},
  doi = {10.1016/j.physletb.2024.138750},
  eprint = {2311.17151}
}

@article{BokulicEtAl2026,
  author = {Bokuli\'c, Ana and Juri\'c, Tajron and Smoli\'c, Ivica},
  title = {Conundrum of regular black holes with nonlinear electromagnetic fields},
  journal = {Phys. Rev. D},
  volume = {113},
  pages = {024044},
  year = {2026},
  doi = {10.1103/z7gd-96ms},
  eprint = {2510.23711}
}

@article{TsudaEtAl2026,
  author = {Tsuda, Ren and Suzuki, Ryotaku and Tomizawa, Shinya},
  title = {Existence conditions of nonsingular dyonic black holes in nonlinear electrodynamics},
  journal = {Phys. Scr.},
  volume = {101},
  pages = {105005},
  year = {2026},
  doi = {10.1088/1402-4896/ae4b97},
  eprint = {2308.02146}
}

@article{DeFeliceTsujikawa2025,
  author = {De Felice, Antonio and Tsujikawa, Shinji},
  title = {Instability of nonsingular black holes in nonlinear electrodynamics},
  journal = {Phys. Rev. Lett.},
  volume = {134},
  pages = {081401},
  year = {2025},
  doi = {10.1103/PhysRevLett.134.081401},
  eprint = {2410.00314}
}

@article{MorenoSarbach2003,
  author = {Moreno, Claudia and Sarbach, Olivier},
  title = {Stability properties of black holes in self-gravitating nonlinear electrodynamics},
  journal = {Phys. Rev. D},
  volume = {67},
  pages = {024028},
  year = {2003},
  doi = {10.1103/PhysRevD.67.024028},
  eprint = {gr-qc/0208090}
}

@article{NomuraYoshidaSoda2020,
  author = {Nomura, Kosuke and Yoshida, Daisuke and Soda, Jiro},
  title = {Stability of magnetic black holes in general nonlinear electrodynamics},
  journal = {Phys. Rev. D},
  volume = {101},
  pages = {124026},
  year = {2020},
  doi = {10.1103/PhysRevD.101.124026},
  eprint = {2004.07560}
}

@article{NomuraYoshida2022,
  author = {Nomura, Kosuke and Yoshida, Daisuke},
  title = {Quasinormal modes of charged black holes with corrections from nonlinear electrodynamics},
  journal = {Phys. Rev. D},
  volume = {105},
  pages = {044006},
  year = {2022},
  doi = {10.1103/PhysRevD.105.044006},
  eprint = {2111.06273}
}

@article{DaghighEtAl2022,
  author = {Daghigh, Ramin G. and Green, Michael D. and Morey, Garrett and Kunstatter, Gabor},
  title = {Gravitational and electromagnetic radiation from an electrically charged black hole in general nonlinear electrodynamics},
  journal = {Phys. Rev. D},
  volume = {105},
  pages = {024055},
  year = {2022},
  doi = {10.1103/PhysRevD.105.024055},
  eprint = {2106.04038}
}

@article{BokulicThermo2021,
  author = {Bokuli\'c, Ana and Juri\'c, Tajron and Smoli\'c, Ivica},
  title = {Black hole thermodynamics in the presence of nonlinear electromagnetic fields},
  journal = {Phys. Rev. D},
  volume = {103},
  pages = {124059},
  year = {2021},
  doi = {10.1103/PhysRevD.103.124059},
  eprint = {2102.06213}
}

@article{ZhangGao2018,
  author = {Zhang, Yang and Gao, Sijie},
  title = {First law and {Smarr} formula of black hole mechanics in nonlinear gauge theories},
  journal = {Class. Quantum Grav.},
  volume = {35},
  pages = {145007},
  year = {2018},
  doi = {10.1088/1361-6382/aac9d4},
  eprint = {1610.01237}
}

@article{Smarr1973,
  author = {Smarr, Larry},
  title = {Mass formula for {Kerr} black holes},
  journal = {Phys. Rev. Lett.},
  volume = {30},
  pages = {71--73},
  year = {1973},
  doi = {10.1103/PhysRevLett.30.71}
}

@article{HaleEtAl2026,
  author = {Hale, Tom\'a\v{s} and Hennigar, Robie A. and Kubiz\v{n}\'ak, David},
  title = {Excising {Cauchy} horizons with nonlinear electrodynamics},
  journal = {Phys. Rev. D},
  volume = {113},
  pages = {L061502},
  year = {2026},
  doi = {10.1103/x8c8-j85k},
  eprint = {2506.20802}
}

@article{BarriolaVilenkin1989,
  author = {Barriola, Manuel and Vilenkin, Alexander},
  title = {Gravitational field of a global monopole},
  journal = {Phys. Rev. Lett.},
  volume = {63},
  pages = {341--343},
  year = {1989},
  doi = {10.1103/PhysRevLett.63.341}
}

@article{Penrose1965,
  author = {Penrose, Roger},
  title = {Gravitational collapse and space-time singularities},
  journal = {Phys. Rev. Lett.},
  volume = {14},
  pages = {57--59},
  year = {1965},
  doi = {10.1103/PhysRevLett.14.57}
}

@article{GarciaDiaz2022,
  author = {Garc\'ia-D\'iaz, Alberto A.},
  title = {{AdS--dS} stationary rotating black hole exact solution within {Einstein}--nonlinear electrodynamics},
  journal = {Ann. Phys.},
  volume = {441},
  pages = {168880},
  year = {2022},
  doi = {10.1016/j.aop.2022.168880},
  eprint = {2201.10682}
}

@article{AyonBeato2024,
  author = {Ay\'on-Beato, Eloy},
  title = {Unveiling the electrodynamics of the first nonlinearly charged rotating black hole},
  journal = {Ann. Phys.},
  volume = {469},
  pages = {169771},
  year = {2024},
  doi = {10.1016/j.aop.2024.169771},
  eprint = {2203.12809}
}

@article{ChengEtAl2025,
  author = {Cheng, Lang and Wang, Peng},
  title = {Rotating black holes in {Einstein}--{Born--Infeld} theory},
  journal = {arXiv e-prints},
  year = {2025},
  eprint = {2507.00879}
}

@article{BandosEtAl2020,
  author = {Bandos, Igor and Lechner, Kurt and Sorokin, Dmitri and Townsend, Paul K.},
  title = {A non-linear duality-invariant conformal extension of {Maxwell}'s equations},
  journal = {Phys. Rev. D},
  volume = {102},
  pages = {121703},
  year = {2020},
  doi = {10.1103/PhysRevD.102.121703},
  eprint = {2007.09092}
}

@article{BarrientosEtAl2025,
  author = {Barrientos, Jos\'e and Cisterna, Adolfo and Hassa\"ine, Mokhtar and Pallikaris, Konstantinos},
  title = {Electromagnetized black holes and swirling backgrounds in nonlinear electrodynamics: The {ModMax} case},
  journal = {Phys. Lett. B},
  volume = {860},
  pages = {139214},
  year = {2025},
  doi = {10.1016/j.physletb.2024.139214},
  eprint = {2409.12336}
}

@article{GibbonsRasheed1995,
  author = {Gibbons, G. W. and Rasheed, D. A.},
  title = {Electric-magnetic duality rotations in nonlinear electrodynamics},
  journal = {Nucl. Phys. B},
  volume = {454},
  pages = {185--206},
  year = {1995},
  doi = {10.1016/0550-3213(95)00409-L},
  eprint = {hep-th/9506035}
}

@article{HaleRotating2026,
  author = {Hale, Tom\'a\v{s} and Hennigar, Robie A. and Kubiz\v{n}\'ak, David},
  title = {Rotating extremal black holes in {Einstein}--{Born--Infeld} theory},
  journal = {J. High Energy Phys.},
  volume = {01},
  number = {2026},
  eid = {155},
  year = {2026},
  doi = {10.1007/JHEP01(2026)155},
  eprint = {2509.13099}
}

@book{SchillingBernstein2012,
  author = {Schilling, Ren\'e L. and Song, Renming and Vondra\v{c}ek, Zoran},
  title = {Bernstein Functions: Theory and Applications},
  year = {2012},
  publisher = {De Gruyter},
  address = {Berlin},
  edition = {2},
  doi = {10.1515/9783110269338}
}

@misc{DLMF,
  author = {{NIST Digital Library of Mathematical Functions}},
  title = {Sec.~8.17: Incomplete Beta Functions},
  year = {2026},
  howpublished = {\url{https://dlmf.nist.gov/8.17}},
  note = {Accessed 29 July 2026}
}

@book{GradshteynRyzhik2015,
  author = {Gradshteyn, I. S. and Ryzhik, I. M.},
  title = {Table of Integrals, Series, and Products},
  year = {2015},
  publisher = {Academic Press},
  edition = {8}
}

@article{Bekenstein1973,
  author = {Bekenstein, Jacob D.},
  title = {Black holes and entropy},
  journal = {Phys. Rev. D},
  volume = {7},
  pages = {2333--2346},
  year = {1973},
  doi = {10.1103/PhysRevD.7.2333}
}

@article{Hawking1975,
  author = {Hawking, Stephen W.},
  title = {Particle creation by black holes},
  journal = {Commun. Math. Phys.},
  volume = {43},
  pages = {199--220},
  year = {1975},
  doi = {10.1007/BF02345020}
}

@article{Wald1993,
  author = {Wald, Robert M.},
  title = {Black hole entropy is the {Noether} charge},
  journal = {Phys. Rev. D},
  volume = {48},
  pages = {R3427--R3431},
  year = {1993},
  doi = {10.1103/PhysRevD.48.R3427},
  eprint = {gr-qc/9307038}
}

@article{Meurer2017,
  author = {Meurer, Aaron and Smith, Christopher P. and Paprocki, Mateusz and others},
  title = {{SymPy}: symbolic computing in {Python}},
  journal = {PeerJ Comput. Sci.},
  volume = {3},
  pages = {e103},
  year = {2017},
  doi = {10.7717/peerj-cs.103}
}

@article{Harris2020,
  author = {Harris, Charles R. and Millman, K. Jarrod and van der Walt, St\'efan J. and others},
  title = {Array programming with {NumPy}},
  journal = {Nature},
  volume = {585},
  pages = {357--362},
  year = {2020},
  doi = {10.1038/s41586-020-2649-2}
}

@article{Virtanen2020,
  author = {Virtanen, Pauli and Gommers, Ralf and Oliphant, Travis E. and others},
  title = {{SciPy} 1.0: fundamental algorithms for scientific computing in {Python}},
  journal = {Nat. Methods},
  volume = {17},
  pages = {261--272},
  year = {2020},
  doi = {10.1038/s41592-019-0686-2}
}

@article{Hunter2007,
  author = {Hunter, John D.},
  title = {Matplotlib: a {2D} graphics environment},
  journal = {Comput. Sci. Eng.},
  volume = {9},
  pages = {90--95},
  year = {2007},
  doi = {10.1109/MCSE.2007.55}
}

@article{Kruskal1960,
  author = {Kruskal, Martin D.},
  title = {Maximal extension of {Schwarzschild} metric},
  journal = {Phys. Rev.},
  volume = {119},
  pages = {1743--1745},
  year = {1960},
  doi = {10.1103/PhysRev.119.1743}
}

@article{Szekeres1960,
  author = {Szekeres, George},
  title = {On the singularities of a {Riemannian} manifold},
  journal = {Publ. Math. Debrecen},
  volume = {7},
  pages = {285--301},
  year = {1960}
}

@book{HawkingEllis1973,
  author = {Hawking, S. W. and Ellis, G. F. R.},
  title = {The Large Scale Structure of Space-Time},
  year = {1973},
  publisher = {Cambridge University Press},
  address = {Cambridge}
}

@book{Wald1984,
  author = {Wald, Robert M.},
  title = {General Relativity},
  year = {1984},
  publisher = {University of Chicago Press},
  address = {Chicago}
}

@article{Eddington1924,
  author = {Eddington, Arthur S.},
  title = {A comparison of {Whitehead}'s and {Einstein}'s formulae},
  journal = {Nature},
  volume = {113},
  pages = {192},
  year = {1924},
  doi = {10.1038/113192a0}
}

@article{Finkelstein1958,
  author = {Finkelstein, David},
  title = {Past-future asymmetry of the gravitational field of a point particle},
  journal = {Phys. Rev.},
  volume = {110},
  pages = {965--967},
  year = {1958},
  doi = {10.1103/PhysRev.110.965}
}

@book{FrolovNovikov1998,
  author = {Frolov, Valeri P. and Novikov, Igor D.},
  title = {Black Hole Physics: Basic Concepts and New Developments},
  year = {1998},
  publisher = {Kluwer Academic},
  address = {Dordrecht},
  doi = {10.1007/978-94-011-5139-9}
}

@article{Escobar2026,
  author = {Escobar, C. A.},
  title = {Effective-metric formulation of {Casimir} energies in nonlinear scalar and electromagnetic theories},
  journal = {Phys. Rev. D},
  volume = {114},
  pages = {025010},
  year = {2026},
  doi = {10.1103/hg8s-pxzc},
  eprint = {2606.17221}
}

@article{DePaulaEtAl2026,
  author = {de Paula, Marco A. A. and Lima, Haroldo C. D. and Cunha, Pedro V. P. and Herdeiro, Carlos A. R. and Crispino, Lu\'is C. B.},
  title = {Two shadows of a single black hole: Vacuum birefringence phenomena within {Einstein}--nonlinear-electrodynamics},
  journal = {Phys. Rev. D},
  volume = {114},
  pages = {024043},
  year = {2026},
  doi = {10.1103/8lkf-mgjw},
  eprint = {2603.17007}
}

@article{DePaulaEtAl2023,
  author = {de Paula, Marco A. A. and Lima Junior, Haroldo C. D. and Cunha, Pedro V. P. and Crispino, Lu\'is C. B.},
  title = {Electrically charged regular black holes in nonlinear electrodynamics: Light rings, shadows, and gravitational lensing},
  journal = {Phys. Rev. D},
  volume = {108},
  pages = {084029},
  year = {2023},
  doi = {10.1103/PhysRevD.108.084029},
  eprint = {2305.04776}
}

@article{UniyalEtAl2023,
  author = {Uniyal, Akhil and Chakrabarti, Sayan and Fathi, Mohsen and {\"O}vg\"un, Ali},
  title = {Observational signatures: Shadow cast by the effective metric of photons for black holes with rational non-linear electrodynamics},
  journal = {Ann. Phys.},
  volume = {462},
  pages = {169614},
  year = {2024},
  doi = {10.1016/j.aop.2024.169614},
  eprint = {2309.13680}
}

@article{PerlickEtAl2015,
  author = {Perlick, Volker and Tsupko, Oleg Yu. and Bisnovatyi-Kogan, Gennady S.},
  title = {Influence of a plasma on the shadow of a spherically symmetric black hole},
  journal = {Phys. Rev. D},
  volume = {92},
  pages = {104031},
  year = {2015},
  doi = {10.1103/PhysRevD.92.104031},
  eprint = {1507.04217}
}

@article{GrallaEtAl2019,
  author = {Gralla, Samuel E. and Holz, Daniel E. and Wald, Robert M.},
  title = {Black hole shadows, photon rings, and lensing rings},
  journal = {Phys. Rev. D},
  volume = {100},
  pages = {024018},
  year = {2019},
  doi = {10.1103/PhysRevD.100.024018},
  eprint = {1906.00873}
}

@article{EHTM87VI2019,
  author = {{Event Horizon Telescope Collaboration}},
  title = {First {M87} {Event Horizon Telescope} results. {VI}. {The} shadow and mass of the central black hole},
  journal = {Astrophys. J. Lett.},
  volume = {875},
  pages = {L6},
  year = {2019},
  doi = {10.3847/2041-8213/ab1141},
  eprint = {1906.11243}
}

@article{EHTSgrAVI2022,
  author = {{Event Horizon Telescope Collaboration}},
  title = {First {Sagittarius A*} {Event Horizon Telescope} results. {VI}. {Testing} the black hole metric},
  journal = {Astrophys. J. Lett.},
  volume = {930},
  pages = {L17},
  year = {2022},
  doi = {10.3847/2041-8213/ac6756}
}

@article{PageThorne1974,
  author = {Page, Don N. and Thorne, Kip S.},
  title = {Disk-accretion onto a black hole. {Time-averaged} structure of accretion disk},
  journal = {Astrophys. J.},
  volume = {191},
  pages = {499--506},
  year = {1974},
  doi = {10.1086/152990}
}

@article{Luminet1979,
  author = {Luminet, Jean-Pierre},
  title = {Image of a spherical black hole with thin accretion disk},
  journal = {Astron. Astrophys.},
  volume = {75},
  pages = {228--235},
  year = {1979}
}

@article{FathiGuzmanVillanueva2026,
  author = {Fathi, Mohsen and Guzm\'{a}n, Ariel and Villanueva, J. R.},
  title = {Quasinormal modes of a static black hole in nonlinear electrodynamics},
  journal = {Eur. Phys. J. C},
  volume = {86},
  pages = {587},
  year = {2026},
  doi = {10.1140/epjc/s10052-026-15713-0}
}

@article{AyonBeatoGarcia2005,
  author = {Ay\'{o}n-Beato, Eloy and Garc\'{i}a, Alberto},
  title = {Four-parametric regular black hole solution},
  journal = {Gen. Relativ. Gravit.},
  volume = {37},
  pages = {635--641},
  year = {2005},
  doi = {10.1007/s10714-005-0050-y},
  eprint = {hep-th/0403229}
}

@article{Dymnikova2004,
  author = {Dymnikova, Irina},
  title = {Regular electrically charged vacuum structures with {de Sitter} centre in nonlinear electrodynamics coupled to general relativity},
  journal = {Class. Quantum Grav.},
  volume = {21},
  pages = {4417--4428},
  year = {2004},
  doi = {10.1088/0264-9381/21/18/009},
  eprint = {gr-qc/0407072}
}

@article{Breton2003,
  author = {Bret\'{o}n, Nora},
  title = {{Born--Infeld} black hole in the isolated horizon framework},
  journal = {Phys. Rev. D},
  volume = {67},
  pages = {124004},
  year = {2003},
  doi = {10.1103/PhysRevD.67.124004},
  eprint = {hep-th/0301254}
}

@article{FernandoKrug2003,
  author = {Fernando, Sharmanthie and Krug, Don},
  title = {Charged black hole solutions in {Einstein}--{Born--Infeld} gravity with a cosmological constant},
  journal = {Gen. Relativ. Gravit.},
  volume = {35},
  pages = {129--137},
  year = {2003},
  doi = {10.1023/A:1021315214180},
  eprint = {hep-th/0306120}
}

@article{CaiPangWang2004,
  author = {Cai, Rong-Gen and Pang, Da-Wei and Wang, Anzhong},
  title = {{Born--Infeld} black holes in {(A)dS} spaces},
  journal = {Phys. Rev. D},
  volume = {70},
  pages = {124034},
  year = {2004},
  doi = {10.1103/PhysRevD.70.124034},
  eprint = {hep-th/0410158}
}

@article{Fernando2006,
  author = {Fernando, Sharmanthie},
  title = {Thermodynamics of {Born--Infeld}--anti-{de Sitter} black holes in the grand canonical ensemble},
  journal = {Phys. Rev. D},
  volume = {74},
  pages = {104032},
  year = {2006},
  doi = {10.1103/PhysRevD.74.104032},
  eprint = {hep-th/0608040}
}

@article{Ma2015,
  author = {Ma, Meng-Sen},
  title = {Magnetically charged regular black hole in a model of nonlinear electrodynamics},
  journal = {Ann. Phys.},
  volume = {362},
  pages = {529--537},
  year = {2015},
  doi = {10.1016/j.aop.2015.08.028},
  eprint = {1509.05580}
}

@article{Kruglov2017,
  author = {Kruglov, S. I.},
  title = {{Born--Infeld}-type electrodynamics and magnetic black holes},
  journal = {Ann. Phys.},
  volume = {383},
  pages = {550--559},
  year = {2017},
  doi = {10.1016/j.aop.2017.06.008},
  eprint = {1707.04495}
}

@article{OkyayOvgun2022,
  author = {Okyay, Mert and \"{O}vg\"{u}n, Ali},
  title = {Nonlinear electrodynamics effects on the black hole shadow, deflection angle, quasinormal modes and greybody factors},
  journal = {J. Cosmol. Astropart. Phys.},
  volume = {01},
  number = {2022},
  eid = {009},
  year = {2022},
  doi = {10.1088/1475-7516/2022/01/009},
  eprint = {2108.07766}
}

@article{UniyalPantigOvgun2023,
  author = {Uniyal, Akhil and Pantig, Reggie C. and \"{O}vg\"{u}n, Ali},
  title = {Probing a nonlinear electrodynamics black hole with thin accretion disk, shadow, and deflection angle with {M87*} and {Sgr A*} from {EHT}},
  journal = {Phys. Dark Univ.},
  volume = {40},
  pages = {101178},
  year = {2023},
  doi = {10.1016/j.dark.2023.101178},
  eprint = {2205.11072}
}

@article{UniyalEtAl2024,
  author = {Uniyal, Akhil and Chakrabarti, Sayan and Pantig, Reggie C. and \"{O}vg\"{u}n, Ali},
  title = {Nonlinearly charged black holes: Shadow and thin-accretion disk},
  journal = {New Astron.},
  volume = {111},
  pages = {102249},
  year = {2024},
  doi = {10.1016/j.newast.2024.102249},
  eprint = {2303.07174}
}

@article{LambiaseEtAl2025,
  author = {Lambiase, Gaetano and Gogoi, Dhruba Jyoti and Pantig, Reggie C. and \"{O}vg\"{u}n, Ali},
  title = {Shadow and quasinormal modes of the rotating {Einstein}--{Euler--Heisenberg} black holes},
  journal = {Phys. Dark Univ.},
  volume = {48},
  pages = {101886},
  year = {2025},
  doi = {10.1016/j.dark.2025.101886},
  eprint = {2406.18300}
}

@article{VerbinEtAl2025,
  author = {Verbin, Yosef and Puli\c{c}e, Beyhan and \"{O}vg\"{u}n, Ali and Huang, Hyat},
  title = {New black hole solutions of second and first order formulations of nonlinear electrodynamics},
  journal = {Phys. Rev. D},
  volume = {111},
  pages = {084061},
  year = {2025},
  doi = {10.1103/PhysRevD.111.084061},
  eprint = {2412.20989}
}

@article{CimdikerEtAl2026,
  author = {{\c{C}}imdiker, \`Ilim \`Irfan and \"{O}vg\"{u}n, Ali and Verbin, Yosef},
  title = {Optical and orbital characterization of spherically symmetric static black holes of self-gravitating new nonlinear electrodynamics model},
  journal = {Phys. Rev. D},
  volume = {114},
  pages = {024032},
  year = {2026},
  doi = {10.1103/p6lp-rktr},
  eprint = {2603.10097}
}

@article{UktamovEtAl2026,
  author = {Uktamov, Uktamjon and \"{O}vg\"{u}n, Ali and Pantig, Reggie C. and Ahmedov, Bobomurat},
  title = {Horizon-brightened acceleration radiation and optical signatures of generic regular black holes from nonlinear electrodynamics},
  journal = {Eur. Phys. J. C},
  volume = {86},
  pages = {631},
  year = {2026},
  doi = {10.1140/epjc/s10052-026-15841-7},
  eprint = {2602.15077}
}

@article{DePaulaEtAl2024,
  author = {de Paula, Marco A. A. and Leite, Luiz C. S. and Dolan, Sam R. and Crispino, Lu\'{i}s C. B.},
  title = {Absorption and unbounded superradiance in a static regular black hole spacetime},
  journal = {Phys. Rev. D},
  volume = {109},
  pages = {064053},
  year = {2024},
  doi = {10.1103/PhysRevD.109.064053},
  eprint = {2401.01767}
}

@article{LeeMyung2026,
  author = {Lee, Wonwoo and Myung, Yun Soo},
  title = {Anisotropic matter and nonlinear electromagnetics black holes},
  journal = {Eur. Phys. J. C},
  volume = {86},
  pages = {525},
  year = {2026},
  doi = {10.1140/epjc/s10052-026-15746-5}
}

@article{LiangEtAl2026,
  author = {Liang, Jie and Liu, Dong and Long, Zheng-Wen},
  title = {Quasinormal modes and greybody factors of black holes corrected by nonlinear electrodynamics},
  journal = {Eur. Phys. J. C},
  volume = {86},
  pages = {17},
  year = {2026},
  doi = {10.1140/epjc/s10052-025-15245-z}
}

@article{CroneyEtAl2025,
  author = {Croney, Lewis and Gregory, Ruth and Ram\'{i}rez-Valdez, Carlos J.},
  title = {Thermodynamics of dyonic black holes in non-linear electrodynamics},
  journal = {J. High Energy Phys.},
  volume = {10},
  number = {2025},
  eid = {013},
  year = {2025},
  doi = {10.1007/JHEP10(2025)013},
  eprint = {2506.06437}
}

@article{ChenEtAl2026,
  author = {Chen, Che-Yu and De Felice, Antonio and Tsujikawa, Shinji and Sano, Taishi},
  title = {Vector {Horndeski} black holes in nonlinear electrodynamics},
  journal = {Phys. Rev. D},
  volume = {113},
  pages = {024027},
  year = {2026},
  doi = {10.1103/fjqh-7gb2}
}

@article{AnLiYang2021,
  author = {An, Yu-Sen and Li, Li and Yang, Fu-Guo},
  title = {No {Cauchy} horizon theorem for nonlinear electrodynamics black holes with charged scalar hairs},
  journal = {Phys. Rev. D},
  volume = {104},
  pages = {024040},
  year = {2021},
  doi = {10.1103/PhysRevD.104.024040},
  eprint = {2106.01069}
}

@article{BokulicHerdeiro2025,
  author = {Bokuli\'{c}, Ana and Herdeiro, Carlos A. R.},
  title = {Exact multiblack hole spacetimes in {Einstein}--{ModMax} theory},
  journal = {Phys. Rev. D},
  volume = {111},
  pages = {064046},
  year = {2025},
  doi = {10.1103/PhysRevD.111.064046}
}

@article{Bronnikov2024,
  author = {Bronnikov, Kirill A.},
  title = {Regular black holes as an alternative to black bounce},
  journal = {Phys. Rev. D},
  volume = {110},
  pages = {024021},
  year = {2024},
  doi = {10.1103/PhysRevD.110.024021}
}

@article{AlencarEtAl2024,
  author = {Alencar, G. and Bronnikov, Kirill A. and Rodrigues, Manuel E. and S\'{a}ez-Chill\'{o}n G\'{o}mez, Diego and Silva, Marcos V. de S.},
  title = {On black bounce space-times in non-linear electrodynamics},
  journal = {Eur. Phys. J. C},
  volume = {84},
  pages = {745},
  year = {2024},
  doi = {10.1140/epjc/s10052-024-13119-4}
}

@article{PinedoFrolov2026,
  author = {Pinedo Soto, Jose and Frolov, Valeri P.},
  title = {Charged black holes in quasitopological gravity coupled to {Born--Infeld} nonlinear electrodynamics},
  journal = {Phys. Rev. D},
  volume = {113},
  pages = {124044},
  year = {2026},
  doi = {10.1103/8px4-2pyk}
}

@article{RazaEtAl2024,
  author = {Raza, Muhammad Ali and Rayimbaev, Javlon and Sarikulov, Furkat and Zubair, M. and Ahmedov, Bobomurat and Stuchl\'{i}k, Zden\v{e}k},
  title = {Shadow of novel rotating black hole in general relativity coupled to nonlinear electrodynamics and constraints from {EHT} results},
  journal = {Phys. Dark Univ.},
  volume = {44},
  pages = {101488},
  year = {2024},
  doi = {10.1016/j.dark.2024.101488},
  eprint = {2311.15784}
}

@article{SarkarEtAl2025,
  author = {Sarkar, Susmita and Sarkar, Nayan and Shah, Hasrat Hussian and Balo, Pankaj and Rahaman, Farook},
  title = {Deflection angle of regular black holes in nonlinear electrodynamics: {Gauss--Bonnet} theorem, time delay, shadow, and greybody bound},
  journal = {Phys. Lett. B},
  volume = {870},
  pages = {139905},
  year = {2025},
  doi = {10.1016/j.physletb.2025.139905}
}

@article{SaleemEtAl2026,
  author = {Saleem, Amna and Majeed, Bushra and Ali, Zulfiqar and Ditta, Allah and Alimova, Asalkhon and Channuie, Phongpichit and Atamurotov, Farruh},
  title = {Impact of nonlinear electrodynamics on particle motion around a charged black hole with matter coupling},
  journal = {Eur. Phys. J. C},
  volume = {86},
  pages = {7},
  year = {2026},
  doi = {10.1140/epjc/s10052-025-15166-x}
}

@article{LiLu2024,
  author = {Li, Zhi-Chao and L\"{u}, Hong},
  title = {Regular electric black holes from {Einstein}--{Maxwell}--scalar gravity},
  journal = {Phys. Rev. D},
  volume = {110},
  pages = {104046},
  year = {2024},
  doi = {10.1103/PhysRevD.110.104046}
}

@article{ContrerasEtAl2025,
  author = {Contreras, Ernesto and Carrasco-Hidalgo, Mikaela and Bargue\~{n}o, Pedro and Suvorov, Arthur G.},
  title = {General framework for the spontaneous scalarization of regular black holes},
  journal = {Phys. Rev. D},
  volume = {112},
  pages = {124053},
  year = {2025},
  doi = {10.1103/9st9-ys67}
}

@article{BalartFernando2021,
  author = {Balart, Leonardo and Fernando, Sharmanthie},
  title = {Thermodynamics and heat engines of black holes with {Born--Infeld}-type electrodynamics},
  journal = {Mod. Phys. Lett. A},
  volume = {36},
  pages = {2150102},
  year = {2021},
  doi = {10.1142/S0217732321501029},
  eprint = {2103.15040}
}

@article{CaiMiao2021,
  author = {Cai, Xin-Chang and Miao, Yan-Gang},
  title = {Quasinormal modes and shadows of a new family of {Ay\'{o}n-Beato--Garc\'{i}a} black holes},
  journal = {Phys. Rev. D},
  volume = {103},
  pages = {124050},
  year = {2021},
  doi = {10.1103/PhysRevD.103.124050},
  eprint = {2104.09725}
}

@article{HegdeEtAl2025,
  author = {Hegde, Kartheek and Kumara, A. Naveena and Rizwan, C. L. Ahmed and Ali, Md Sabir and Ajith, K. M.},
  title = {Thermodynamics, photon sphere and thermodynamic geometry of {Ay\'{o}n-Beato--Garc\'{i}a} spacetime},
  journal = {Int. J. Mod. Phys. A},
  volume = {40},
  pages = {2550115},
  year = {2025},
  doi = {10.1142/S0217751X25501155},
  eprint = {2104.08091}
}

@article{AyonBeatoFloresHassaine2024,
  author = {Ay\'{o}n-Beato, Eloy and Flores-Alfonso, Daniel and Hassaine, Mokhtar},
  title = {Nonlinearly charging the conformally dressed black holes preserving duality and conformal invariance},
  journal = {Phys. Rev. D},
  volume = {110},
  pages = {064027},
  year = {2024},
  doi = {10.1103/PhysRevD.110.064027},
  eprint = {2404.08753}
}

@article{Dey2004,
  author = {Dey, Tanay Kumar},
  title = {{Born--Infeld} black holes in the presence of a cosmological constant},
  journal = {Phys. Lett. B},
  volume = {595},
  pages = {484--490},
  year = {2004},
  doi = {10.1016/j.physletb.2004.06.047},
  eprint = {hep-th/0406169}
}

@article{CataldoGarcia2000,
  author = {Cataldo, Mauricio and Garc\'{i}a, Alberto},
  title = {Regular (2+1)-dimensional black holes within nonlinear electrodynamics},
  journal = {Phys. Rev. D},
  volume = {61},
  pages = {084003},
  year = {2000},
  doi = {10.1103/PhysRevD.61.084003}
}

@article{RodriguesSilvaSiqueira2020,
  author = {Rodrigues, Manuel E. and Silva, Marcos V. de S. and de Siqueira, Andrew S.},
  title = {Regular multihorizon black holes in general relativity},
  journal = {Phys. Rev. D},
  volume = {102},
  pages = {084038},
  year = {2020},
  doi = {10.1103/PhysRevD.102.084038}
}

@article{DeFeliceTsujikawaScalar2025,
  author = {De Felice, Antonio and Tsujikawa, Shinji},
  title = {Nonsingular black holes and spherically symmetric objects in nonlinear electrodynamics with a scalar field},
  journal = {Phys. Rev. D},
  volume = {111},
  pages = {064051},
  year = {2025},
  doi = {10.1103/PhysRevD.111.064051}
}

@article{WangWuYang2019,
  author = {Wang, Peng and Wu, Houwen and Yang, Haitang},
  title = {Thermodynamics of nonlinear electrodynamics black holes and the validity of weak cosmic censorship at charged particle absorption},
  journal = {Eur. Phys. J. C},
  volume = {79},
  pages = {572},
  year = {2019},
  doi = {10.1140/epjc/s10052-019-7090-z}
}

\end{document}